\newtheorem{thm}{Theorem}
\newtheorem{lem}[thm]{Lemma}
\newtheorem{clm}[thm]{Claim}
\newtheorem{dfn}[thm]{Definition}
\newenvironment{proof}
{\noindent\textsc{Proof:}}{\hfill $\square$}
\newcommand{\lra}{\leftrightarrow}
\newcommand{\M}{\texttt{M}}
\newcommand{\R}{\texttt{R}}
\newcommand{\SA}{\texttt{S}}
\newcommand{\LA}{\mathbf{LA}}
\newcommand{\ELA}{\exists\mathbf{LA}}
\newcommand{\quot}{\text{\tt div}}
\newcommand{\rem}{\text{\tt rem}}
\newcommand{\row}{\text{\tt r}}
\newcommand{\col}{\text{\tt c}}
\newcommand{\cond}{\mathtt{cond}}
\newcommand{\entry}{\text{\tt e}}
\newcommand{\Lang}{\mathcal{L}_{\LA}}
\newcommand{\ra}{\rightarrow}
\newcommand{\predicatefont}{\mathrm}
\newcommand{\KMM}{\predicatefont{KMM}}
\newcommand{\Menger}{\predicatefont{Menger}}
\newcommand{\Mincover}{\predicatefont{MinCover}}
\newcommand{\Maxselect}{\predicatefont{MaxSelect}}
\newcommand{\Mincut}{\predicatefont{MinCut}}
\newcommand{\Maxdisj}{\predicatefont{MaxDisj}}
\newcommand{\Cutsize}{\predicatefont{CutSize}}
\newcommand{\cover}{\predicatefont{Cover}}
\newcommand{\select}{\predicatefont{Select}}
\newcommand{\Path}{\predicatefont{Path}}
\newcommand{\Disjoint}{\predicatefont{Disjoint}}
\newcommand{\Collectdisj}{\predicatefont{CollectDisj}}
\newcommand{\Cut}{\predicatefont{Cut}}
\newcommand{\SDR}{\predicatefont{SDR}}
\newcommand{\Perm}{\predicatefont{Perm}}
\newcommand{\Unionprop}{\predicatefont{UnionProp}}
\newcommand{\Hall}{\predicatefont{Hall}}
\newcommand{\Dilworth}{\predicatefont{Dilworth}}
\newcommand{\Maxantichain}{\predicatefont{MaxAntiChain}}
\newcommand{\Minchain}{\predicatefont{MinChain}}
\newcommand{\Chain}{\predicatefont{Chain}}
\begin{document}
\title{Feasible combinatorial matrix theory \\
{\large Polytime proofs for K{\"o}nig's Min-Max and related theorems}}

\author{\IEEEauthorblockN{Ariel Fern{\'a}ndez}
\IEEEauthorblockA{McMaster University\\
Hamilton, Canada\\
{\tt fernanag@mcmaster.ca}}
\and
\IEEEauthorblockN{Michael Soltys}
\IEEEauthorblockA{McMaster University\\
Hamilton, Canada\\
{\tt soltys@mcmaster.ca}}}

\maketitle

\begin{abstract}
We show that the well-known K{\"o}nig's Min-Max Theorem (KMM), a
fundamental result in combinatorial matrix theory, can be proven in
the first order theory $\LA$ with induction restricted to $\Sigma_1^B$
formulas.  This is an improvement over the standard textbook proof of
KMM which requires $\Pi_2^B$ induction, and hence does not yield
feasible proofs --- while our new approach does.  
$\LA$ is a weak theory that essentially captures the
ring properties of matrices; however, equipped with $\Sigma_1^B$
induction $\LA$ is capable of proving KMM, and a host of other
combinatorial properties such as Menger's, Hall's and Dilworth's
Theorems. Therefore, our result formalizes Min-Max type of reasoning
within a feasible framework.
\end{abstract}

%
\IEEEpeerreviewmaketitle

\section{Introduction}

In this paper we are concerned with the complexity of formalizing
reasoning about combinatorial matrix theory.  We are interested in the
strength of the bounded arithmetic theories necessary in order to
prove the fundamental results of this field. We show, by introducing
new proof techniques, that the logical theory $\LA$ with induction
restricted to bounded existential matrix quantification is sufficient
to formalize a large portion of combinatorial matrix theory.

Perhaps the most famous theorem in combinatorial matrix theory is the
K{\"o}nig's Mini-Max Theorem (KMM) which arises naturally in all areas
of combinatorial algorithms --- for example ``network flows'' with
``min-cut max-flow'' type of reasoning.
See~\cite{konig-hungarian-1916,konig-german-1916} for the original
papers introducing KMM, and see~\cite{cook-dai-2012} for recent work
related to formalizing proof of correctness of the Hungarian
algorithm, which is an algorithm based on KMM. As far as we know, we
give the first feasible proof of KMM.

As KMM is a cornerstone result, it has several counter-parts in
related areas of mathematics: Menger's Theorem, counting disjoint
paths; Hall's Theorem, giving necessary and sufficient conditions for
the existence of a ``system of distinct representatives'' of a
collection of sets; Dilworth's Theorem, counting the number of
disjoint chains in a poset, etc. We note that we actually show the
equivalence of KMM with a {\em restricted} version of Menger's
Theorem.

We show that KMM can be proven feasibly (Theorem~\ref{thm:main}), and
we do so with a new proof of KMM that relies on introducing a new
notion (Definition~\ref{dfn:dc}).  Furthermore, we show that the
theorems related to KMM, and listed in the above paragraph, can also
be proven feasibly; in fact, all these theorems are equivalent to KMM,
and the equivalence can be shown in $\LA$
(Theorem~\ref{thm:equivalences}). We believe that this captures the
proof complexity of Min-Max reasoning.

Our results show that Min-Max reasoning can be formalized with uniform
Extended Frege. It would be very interesting to know whether the
techniques recently introduced by \cite{Tzameret-2011} could bring the
complexity further down to quasi-polynomial Frege.

\section{Background}\label{sec:II}

KMM states the following: Let $A$
be a matrix of size $n\times m$ with entries in $\{0,1\}$, what we
sometimes call a 0-1 matrix. A {\em line} of $A$ is an entire
row or column of $A$; given an entry $A_{ij}$ of $A$ (when giving
$\LA$ formulas we shall denote such an entry with $A(i,j)$), we say that a
line {\em covers} that entry if this line is either row $i$ or column
$j$. Then, the minimal number of lines in $A$ that
cover all of the 1s in $A$ is equal to the maximal number of 1s in
$A$ with no two of the 1s on the same line. Note that KMM is stated
for $n\times m$ matrices, but for simplicity we shall state it for
$n\times n$ (i.e., square) matrices; of course, all results given in this
paper hold for both.

See~\cite[pg.~6]{brualdi} for a classical discussion and proof of KMM
(and note that the proof relies, implicitly, on a $\Pi_2^B$ type of
induction).

We give a feasible proof of KMM in the logical theory $\LA$ defined
in~\cite{solcoo}. By restricting the induction to be over $\Sigma_1^B$
formulas, that is formulas whose prenex form consists of a block of
bounded existential matrix quantifiers --- and no other matrix
quantifiers --- we manage to prove KMM in a fragment of $\LA$ called
$\ELA$. While the matrices in the statements of the theorems have
$\{0,1\}$ entries, we assume that the underlying ring is $\mathbb{Z}$,
the set of integers. We require the integers as one of our fundamental
operations will be counting the number of~1s in a 0-1 matrix, i.e.,
computing $\Sigma A$, the sum of all the entries of $A$.

The background for $\LA$ is given in a one-page Appendix at the end of
the paper, but the interested reader can read the full treatment
in~\cite{solcoo}.  We shall leave routine details of proofs to the
reader, in the interest of space.

The main contribution of this paper is to show that KMM can be proven
in the theory $\ELA$ which implies that it can be proven feasibly. We
mention here an important observation of Je{\v{r}}{\'a}bek
from~\cite[pg.\ 44]{phd-jerabek}: $\ELA$ does not necessarily
translate into a polytime proof system (e.g., extended Frege) when the
matrices are over $\mathbb{Z}$. However, if we restrict the quantified
matrices to be over $\{0,1\}$, which is what we do in our proofs, it
readily translates into extended Frege.  

We use $|A|\le n$ to abbreviate $r(A)\le n\wedge c(A)\le n$, that is,
the number of rows of $A$ is bounded by $n$, and the number of columns
of $A$ is bounded by $n$.  We let $(\exists A\le n)\alpha$ --- resp.\
$(\forall A\le n)\alpha$ --- abbreviate $(\exists A)[|A|\le
n\wedge\alpha]$ --- resp.\ $(\forall A)[|A|\le n\ra\alpha]$.  These
are {\em bounded matrix quantifiers}.

Note that $\LA$ allows for reasoning with arbitrary quantification;
however, in $\LA$ we only allow induction over formulas without matrix
quantification.  On the other hand, in $\ELA$ we allow induction over
so called $\Sigma_1^B$ formulas.  These are formulas, which when
presented in prenex form, contain a single block of bounded
existential matrix quantifiers.  The set of formulas $\Pi_1^B$ is
defined similarly, except the block of quantifiers is universal. 

In general, $\Sigma_i^B$ is the set of formulas which, when presented
in prenex form, start with a block of bounded existential matrix
quantifiers, followed by a block of bounded universal matrix
quantifiers, with $i$ such alternating blocks. The set $\Pi_i^B$ is
the same, except it starts with a block of universal matrix
quantifiers.

The main contribution of our paper can now be stated more precisely:
following K{\"o}nig's original proof of KMM, which is also the
standard presentation of the proof in the literature (see the seminal
work in the field \cite[pg.\ 6]{brualdi}) one can construct a proof
with $\Pi_2^B$ induction, which does not yield translations into extended
Frege proofs.  On the other hand, we are able to give a proof that
uses only $\Sigma_1^B$ induction, which do yield extended Frege
proofs, and thereby a feasible proof of KMM.  Our
insight is that while we are doing induction over the size of
matrices, we can pre-arrange our matrices in a way that lowers the
complexity of the induction. This is accomplished with the procedure
outlined in the Definition~\ref{dfn:dc} --- the diagonal property for
matrices --- and the subsequent proof of Claim~\ref{clm:2}.

We show how to express the concepts necessary to state KMM in the
language $\Lang$. First, we say that the matrix $\alpha$ is a {\em
cover} of a matrix $A$ with the predicate:
\begin{equation}\label{eq:defcover}
\begin{split}
& \cover(A,\alpha):= \\
& \forall i,j\le
r(A)(A(i,j)=1\ra\alpha(1,i)=1\vee\alpha(2,j)=1)
\end{split}
\end{equation}
We allow some leeway with notation: $\forall i,j\le r(A)$ is of course
$(\forall i\le r(A))(\forall j\le r(A))$.
The matrix $\alpha$ keeps track of the lines that cover $A$; it does
so with two rows: the top row keeps track of the horizontal lines,
and the bottom row keeps track of the vertical line. The condition
ensures that any 1 in $A$ is covered by some line stipulated in
$\alpha$.

The next predicate expresses that the matrix $\beta$ is a selection of
1s of $A$ so that no two of these lines are on the same line.  Thus,
$\beta$ can be seen as a ``subset'' of a permutation matrix; that is,
each $\beta$ is obtained from some permutation matrix by deleting some
(possibly none) of the 1s.  We say that $\beta$ is a {\em selection}
of $A$, and it is given with the following formula:
\begin{equation}\label{eq:defselect}
\begin{split}
\select&(A,\beta):= \forall i,j\le r(A)((\beta(i,j)=1\ra A(i,j)=1) \\
\wedge&\forall k\le r(A)(\beta(i,j)=1\ra\beta(i,k)=0\wedge \beta(k,j)=0))
\end{split}
\end{equation}

We are interested in a minimum cover (as few 1s in $\alpha$ as
possible) and a maximum selection (as many 1s in $\beta$ as possible).
The following two predicates express that $\alpha$ is a minimum cover
and $\beta$ a maximum selection.
\begin{align}
&\Mincover(A,\alpha) := \\
&\cover(A,\alpha)\wedge\nonumber
\forall\alpha'\le c(\alpha)(\cover(A,\alpha')\ra\Sigma\alpha'\ge
\Sigma\alpha)\label{eq:defmincover} \\
&\Maxselect(A,\beta) := \\
&\select(A,\beta)\wedge\nonumber
\forall\beta'\le r(\beta)(\select(A,\beta')\ra\Sigma\beta'\le
\Sigma\beta)\label{eq:defmaxselect}
\end{align}

Clearly $\Mincover$ and $\Maxselect$ are $\Pi_1^B$ formulas.
We can now state KMM in the language of $\Lang$ as follows:
\begin{equation}\label{eq:KMM}
\Mincover(A,\alpha)\wedge\Maxselect(A,\beta)\ra
\Sigma\alpha=\Sigma\beta
\end{equation}
Note that~(\ref{eq:KMM}) is a $\Sigma_1^B$ formula. The reason is that
in prenex form, the universal matrix quantifiers in $\Mincover$ and
$\Maxselect$ become existential as we pull them out of the implication; 
they are also bounded.

Let $\KMM(A,n)$ be the following $\Sigma_1^B$ formula: it is a
conjunction of the statement that $A$ is an $n\times n$ matrix, which
we abbreviate informally as $|A|=n$, and which in $\Lang$ is stated as
$r(A)=n\wedge c(A)=n$, and~(\ref{eq:KMM}) in prenex form:
\begin{equation}\label{eq:KMM2}
\KMM(A,n):=|A|=n\wedge\text{prenex}(\text{\ref{eq:KMM}}).
\end{equation}

Given a matrix $A$, let $l_A$ and $o_A$ denote the minimum number of
lines necessary to cover all the 1s of $A$, and the maximum number of
1s no two on the same line, respectively. (Of course, K{\"o}nig's
theorem says that for all $A$, $l_A=o_A$.) In terms of the definitions
just given, we have that $l_A=\Sigma\alpha$ where
$\Mincover(A,\alpha)$, and $o_A=\Sigma\beta$ where
$\Maxselect(A,\beta)$.

Finally, the fact that $P$ is a permutation matrix can be stated
easily with a predicate free of matrix quantification; see, for
example, \cite{soltys-tcs04}.

\section{The main result}

With the basic machinery in place, we can now prove the main Theorem
of the paper.

\begin{thm}\label{thm:main}
$\ELA\vdash\KMM$.
\end{thm}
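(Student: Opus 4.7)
The plan is to prove $\Sigma\alpha=\Sigma\beta$ by establishing both inequalities separately, with careful control over quantifier complexity so that every induction stays within $\Sigma_1^B$.

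The direction $\Sigma\alpha\geq\Sigma\beta$ is the easy half: every selected $1$ in $\beta$ must, by $\cover(A,\alpha)$, lie on some line listed in $\alpha$, and by $\select(A,\beta)$ distinct selected $1$s lie in distinct rows and distinct columns. Hence no two selected $1$s can be covered by the same line of $\alpha$, and a straightforward count gives $\Sigma\alpha\geq\Sigma\beta$. Since $\cover$ and $\select$ contain only number quantifiers and $\Sigma$ is an $\LA$-term, this counting argument goes through in plain $\LA$ without any matrix-quantifier induction.

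For the harder direction $\Sigma\alpha\leq\Sigma\beta$, the strategy is to construct a selection $\beta$ of size at least $\Sigma\alpha=l_A$. The main technical device is the diagonal property of Definition~\ref{dfn:dc} together with Claim~\ref{clm:2}: I would first invoke Claim~\ref{clm:2} to permute the rows and columns of $A$ so that a maximum selection is exposed along the main diagonal. Once $A$ is in diagonal form, a distinguished row/column pair is available as an $\LA$-term of $A$, so the inductive step constructs a specific $(n-1)\times(n-1)$ submatrix $A'$ with an induced cover $\alpha'$ of size $\Sigma\alpha-1$, rather than universally quantifying over all smaller matrices as the textbook proof does. The induction formula becomes ``there exists $\beta$ with $\select(A,\beta)\wedge\Sigma\beta\geq\Sigma\alpha$,'' which is $\Sigma_1^B$ since $\beta$ is the only matrix quantifier required and $\select$ itself has no matrix quantifiers. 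After pulling back the selection $\beta'$ supplied for $A'$ and reinserting the deleted diagonal $1$, we obtain a selection $\beta$ of $A$ of size $\Sigma\alpha$, closing the step.

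The main obstacle I anticipate is Claim~\ref{clm:2} itself: one must exhibit, as an $\LA$-term or at worst a $\Sigma_1^B$ witness, the row/column permutation that brings $A$ into diagonal form, and then argue within $\ELA$ that this preprocessing preserves both $l_A$ and $o_A$. The invariance of $l_A$ and $o_A$ under row/column permutation is essentially bookkeeping, but producing the permutation explicitly — so that the subsequent inductive step has access to the diagonal entries by name, rather than via a further universal quantifier over candidate permutations — is the delicate point. A secondary concern is ensuring that the submatrix $A'$, the derived cover $\alpha'$, and the lifted selection $\beta$ are all definable from $A$, $\alpha$, and $\beta'$ by $\LA$-terms, so that no stray matrix quantifiers sneak into the induction formula. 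Once these pieces are in place, the $\Sigma_1^B$ induction closes and yields $\ELA\vdash\KMM$.
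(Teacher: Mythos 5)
Your overall architecture matches the paper's: the easy inequality $o_A\le l_A$ in plain $\LA$, and the hard inequality by a $\Sigma_1^B$ induction on matrix size after preprocessing $A$ into the diagonal form of Definition~\ref{dfn:dc} via Claim~\ref{clm:2} and Lemma~\ref{lem:1}. You also correctly identify why the preprocessing matters: it makes the relevant submatrix an explicit $\LA$-term, so the induction formula keeps only existential matrix quantifiers. But you misstate what the diagonal property actually gives. Claim~\ref{clm:2} does \emph{not} permute $A$ so that ``a maximum selection is exposed along the main diagonal'': the matrix $\left(\begin{smallmatrix}1&1\\1&0\end{smallmatrix}\right)$ already has the diagonal property, yet its diagonal carries only one of the two 1s of a maximum selection. (If the preprocessing really placed a maximum selection on the diagonal, most of KMM would already be buried inside Claim~\ref{clm:2}, which is proved by a simple layer-by-layer permutation.) What the diagonal property provides is only the dichotomy: either $A_{11}=1$, or the entire first row and first column vanish. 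That dichotomy is what drives the induction.

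More seriously, your inductive step --- build an induced cover of the submatrix of size $\Sigma\alpha-1$, apply the induction hypothesis, and ``reinsert the deleted diagonal 1'' --- describes only one of the cases that arise, and not the hard one. Writing $A$ in block form with corner $a$, first row $R$, first column $S$, and principal submatrix $M$: when $a=R=S=0$ the cover does not shrink at all and there is no diagonal 1 to reinsert; and when $a=1$ one can have $l_A\le l_M+2$, because both the first row and the first column may be needed on top of a minimal cover of $M$. In that case the selection of $M$ must be enlarged by \emph{two} 1s, one in $R$ and one in $S$ (neither of them the corner entry), and one must argue that such 1s exist and do not collide with the selection of $M$. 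The paper does this by splitting on whether the extension of a minimal cover of $M$ already covers all of $R$ or all of $S$ (sub-subcases (1-b-i) and (1-b-ii) in the proof of Claim~\ref{clm:4}), together with the separate subcase $l_M=n-1$ where one instead argues $o_A\ge n\ge l_A$. This case analysis is the crux of the argument; ``delete one line, add back one diagonal 1'' does not close the induction.
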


What this Theorem says is that $\KMM$ can be shown in $\LA$ with
$\Sigma_1^B$ induction, and thus in uniform extended Frege, which in
turn means feasibly. The rest of this section consists in a proof of
this theorem.

Some of the intermediate results can be shown with just $\LA$
induction (i.e., induction over formulas without matrix quantifiers,
that is, over formulas in $\Sigma_0^B=\Pi_0^B$).  We use the weaker
theory whenever possible.

\begin{lem}\label{lem:1}
Given a matrix $A$, and given
any permutation matrix $P$, we have 
\begin{itemize}
\item $\LA\vdash l_{PA}=l_{AP}=l_A$
\item $\LA\vdash o_{PA}=o_{AP}=o_A$
\end{itemize}
That is, these four equalities can be proven in $\LA$, i.e., with
induction restricted to formulas without matrix quantifiers.
\end{lem}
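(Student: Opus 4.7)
The plan is to prove each of the four equalities by exhibiting an explicit, $\LA$-definable, sum-preserving bijection between the covers (for $l$) or the selections (for $o$) of one matrix and those of the other. Once such bijections are in hand, the equalities follow simply by instantiating the $\Pi_1^B$ minimality/maximality content of $\Mincover$ and $\Maxselect$ against the images of optimal witnesses, so only $\Sigma_0^B$ reasoning --- the induction available in $\LA$ --- is ever required.

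For $l_{PA}=l_A$, let $\pi$ be the row permutation encoded by $P$, so that $(PA)(i,j)=A(\pi(i),j)$. Given any cover $\alpha$ of $A$, define a candidate cover $\hat\alpha$ of $PA$ by permuting the first (horizontal-line) row of $\alpha$ via $P$ and leaving the second (vertical-line) row untouched: $\hat\alpha(1,i):=\sum_k P(i,k)\alpha(1,k)$ and $\hat\alpha(2,j):=\alpha(2,j)$. The implication $\cover(A,\alpha)\ra\cover(PA,\hat\alpha)$ is entrywise verifiable and hence $\Pi_0^B$-provable in $\LA$, and $\Sigma\hat\alpha=\Sigma\alpha$ since permuting coordinates preserves sums. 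Applying this construction to a minimum cover of $A$ gives $l_{PA}\le l_A$, and applying it with $P^T$ in place of $P$ to a minimum cover of $PA$ (noting $P^TP=I$, so $P^T(PA)=A$) gives $l_A\le l_{PA}$. The equality $l_{AP}=l_A$ is entirely symmetric: $AP$ permutes columns rather than rows, so one keeps the first row of $\alpha$ fixed and instead permutes its second row.

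For $o_{PA}=o_A$ the analogous sum-preserving bijection on selections is $\hat\beta:=P\beta$, giving $\hat\beta(i,j)=\beta(\pi(i),j)$. The three clauses of $\select$ --- that a selected entry is a $1$ of the matrix, and that no two selections share a row or column --- transfer directly from $\beta$ to $\hat\beta$: the first uses $(PA)(i,j)=A(\pi(i),j)$, and the latter two use that $\pi$ is a bijection. The same two-sided argument then yields $o_{PA}=o_A$, and $o_{AP}=o_A$ follows by taking $\hat\beta:=\beta P$. The main technical step to confirm is the packaging, at the $\Sigma_0^B$ level, of the standard permutation-matrix identities $(PA)(i,j)=A(\pi(i),j)$ and $\Sigma(P\alpha)=\Sigma\alpha$; these are already routine consequences of the matrix-algebra development for $\LA$ in~\cite{solcoo}, so once they are invoked the rest is an entry-by-entry verification.
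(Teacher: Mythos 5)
Your proposal is correct and follows essentially the same route as the paper: transport a cover by permuting the appropriate row of $\alpha$ (and a selection by $P\beta$ or $\beta P$), verify entrywise that the $\cover$/$\select$ predicates are preserved, note that $\Sigma$ is invariant under the permutation (the paper spells this out as a two-stage $\Sigma_0^B$ induction, first over a row's length and then over the number of rows), and obtain optimality in both directions via the inverse permutation --- which is exactly your $P^T$ argument, phrased in the paper as a proof by contradiction.
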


\begin{proof}
$\LA$ shows that if we reorder the rows or columns (or both) of a
given matrix~$A$, then the new matrix, call it~$A'$, where $A'=PA$ or
$A'=AP$, has the same size minimum cover and the same size maximum
selection. Of course, we can reorder both rows and columns by applying
the statement twice: $A'=PA$ and $A''=A'Q=PAQ$.

The first thing that we need to show is that:
\begin{itemize}
\item $\LA \vdash \cover(A,\alpha)\ra \cover(A',\alpha')$
\item $\LA \vdash \select(A,\beta)\ra \select(A',\beta')$
\end{itemize} 
where $A'$ is defined as in the above paragraph, and $\alpha'$ is the
same as $\alpha$, except the first row of $\alpha$ is now reordered by
the same permutation $P$ that multiplied $A$ on the left (and the
second row of $\alpha$ is reordered if $P$ multiplied $A$ on the
right). The matrix $\beta$ is even easier to compute, as
$\beta'=P\beta$ if $A'=PA$, and $\beta'=\beta P$ if $A'=AP$. It
follows from $P$ being a permutation matrix that
$\Sigma\alpha=\Sigma\alpha'$ and $\Sigma\beta=\Sigma\beta'$: we can
show by $\LA$ induction on the size of matrices that if $X'$ is the
result of rearranging $X$ (i.e., $X'=PXQ$, where $P,Q$ are permutation
matrices), then $\Sigma X=\Sigma X'$. We do so first on $X$ consisting
of a single row, by induction on the length of the row. Then we take
the single row as the basis case for induction over the number of rows
of a general $X$.

It is clear that given $A'$, the cover $\alpha'$ has been adjusted
appropriately; same for the selection $\beta'$. We can prove it
formally in $\LA$ by contradiction: suppose some 1 in $A'$ is not
covered in $\alpha'$; then the same 1 in $A$ would not be covered by
$\alpha$. For the selections, note that reordering rows and/or columns
we maintain the property of being a selection: we can again prove this
formally in $\LA$ by contradiction: if $\beta'$ has two 1s on the same
line, then so would $\beta$.

The next thing to show is that
\begin{itemize}
\item $\LA \vdash \Mincover(A,\alpha)\ra \Mincover(A',\alpha')$
\item $\LA \vdash \Maxselect(A,\beta)\ra \Maxselect(A',\beta')$
\end{itemize} 
and the reasoning that accomplishes this is by contradiction. As
permuting only reorders the matrices (it does not add or take away
1s), if the right-hand side does not hold, we would get that the
left-hand side does not hold by applying the inverse of the
permutation matrix.

All these arguments can be easily formalized in $\LA$, and we leave
the details to the reader.
\end{proof}

The next definition is a key concept in the $\Sigma_1^B$ proof of KMM.

\begin{dfn}\label{dfn:dc}
We say that an $n\times n$ matrix over $\{0,1\}$ has the {\em diagonal
property} if for each diagonal entry $(i,i)$ of $A$, either
$A_{ii}=1$, or $(\forall j\ge i)[A_{ij}=0\wedge A_{ji}=0]$.
\end{dfn}

\begin{clm}\label{clm:2}
Given any matrix $A$, $\ELA$ proves that there exist permutation
matrices $P,Q$ such that $PAQ$ has the diagonal property.
\end{clm}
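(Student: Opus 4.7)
The plan is to construct $P$ and $Q$ by first extracting a \emph{maximal} selection $\beta$ of $A$ and then permuting so that the 1s of $\beta$ sit on the leading diagonal. Here ``maximal'' means that no further 1 of $A$ can be adjoined to $\beta$ without violating $\select$; equivalently, every 1 of $A$ lies in a row or column already occupied by some 1 of $\beta$. Maximality (not maximum cardinality) is exactly what forces the diagonal property on the unused submatrix, and it has the virtue of being extractable by $\Sigma_1^B$ induction alone, with no circular appeal to $\KMM$ itself.

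To obtain such a $\beta$, I would apply $\Sigma_1^B$ induction on $k$ to the formula
\[
\psi(k):=\exists\beta\le r(A)\,\bigl(\select(A,\beta)\wedge(\Sigma\beta=k\vee\mathtt{mxl}(A,\beta))\bigr),
\]
where $\mathtt{mxl}(A,\beta)$ is the matrix-quantifier-free statement $\forall i,j\le r(A)\,(A(i,j)=1\to\exists\ell\le r(A)\,(\beta(i,\ell)=1\vee\beta(\ell,j)=1))$. Since $\select$ is matrix-quantifier-free by~(\ref{eq:defselect}), $\psi(k)$ is indeed $\Sigma_1^B$. The base $\psi(0)$ is witnessed by the zero matrix. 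For the induction step, a witness $\beta$ of $\psi(k)$ is either already maximal --- in which case it witnesses $\psi(k+1)$ unchanged --- or satisfies $\Sigma\beta=k$ together with a 1 of $A$ at some $(i,j)$ whose row and column are both free in $\beta$, and toggling $\beta(i,j)$ from $0$ to $1$ yields a witness of $\psi(k+1)$. Instantiating at $k=r(A)$ forces the right-hand disjunct and delivers a maximal $\beta$.

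Given such a maximal $\beta$ with $\Sigma\beta=k$ and 1s at positions $(r_1,c_1),\ldots,(r_k,c_k)$, I would define in $\LA$ the permutation matrices $P,Q$ that send row $r_\ell\mapsto\ell$ and column $c_\ell\mapsto\ell$ for $\ell\le k$ and place the remaining rows and columns, in their natural order, at positions $k+1,\ldots,r(A)$; these are directly definable by $\Lang$-terms from $\beta$. The diagonal property of $PAQ$ then splits into two cases: for $i\le k$, $(PAQ)_{ii}=A_{r_i c_i}=1$ so the condition is immediate; for $i>k$ and $j\ge i$, both $(PAQ)_{ij}$ and $(PAQ)_{ji}$ correspond to entries of $A$ in a row and a column both unused by $\beta$, and hence vanish by the maximality clause $\mathtt{mxl}$. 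The main obstacle is the $\Sigma_1^B$ induction above --- specifically, extracting an explicit witness $(i,j)$ from the negation of $\mathtt{mxl}$ and carrying out the single-point update of $\beta$ at the level of $\Lang$-terms; the subsequent construction of $P$ and $Q$ from $\beta$, and the verification of the diagonal property, are routine $\LA$-level arguments in the spirit of Lemma~\ref{lem:1}.
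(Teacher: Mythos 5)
Your proposal is correct, but it takes a genuinely different route from the paper's. The paper builds $P,Q$ greedily, layer by layer: at stage $i$ it inspects the hook formed by row $i$ and column $i$ of the partially transformed matrix (restricted to indices $\ge i$), and if the diagonal entry is $0$ while the hook contains a $1$, it permutes that $1$ into position $(i,i)$; the invariant is that these swaps never disturb earlier layers, and the $\Sigma_1^B$ induction is on the layer index. You instead extract, by $\Sigma_1^B$ induction on the cardinality $k$, a \emph{maximal} (non-extendable) selection $\beta$, and then apply a single pair of permutations moving its $1$s to positions $(1,1),\ldots,(k,k)$. Your route buys a stronger normal form --- $1$s on the first $k$ diagonal positions followed by an identically zero trailing $(n-k)\times(n-k)$ block, which immediately implies the diagonal property --- and it cleanly separates the genuinely inductive content (growing a matching one element at a time, which is where the existential matrix quantifier is needed) from the purely combinatorial rearrangement, which is plain $\LA$ in the spirit of Lemma~\ref{lem:1}. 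It also makes explicit the classical fact underlying the construction: the rows and columns met by a maximal matching form a cover, so the complementary block vanishes. Two small remarks: at $k=r(A)$ the left disjunct $\Sigma\beta=r(A)$ is not literally excluded, but a selection with $r(A)$ ones occupies every row and column, so $\mathtt{mxl}(A,\beta)$ holds in that case as well (alternatively, instantiate at $k=r(A)+1$, where the left disjunct is refuted by the counting argument of Claim~\ref{clm:3}); and extracting the index witness $(i,j)$ from $\neg\mathtt{mxl}$ and performing the single-entry update are indeed routine, since bounded index existentials can be witnessed by least-witness terms built from $\Sigma$ and $\cond$. Both proofs stay within $\Sigma_1^B$ induction and neither appeals circularly to KMM, so your argument is a valid substitute for the paper's.
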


\begin{proof}
We construct $P,Q$ inductively on $n=|A|$. Let the {\em $i$-th layer
of $A$} consist of the following entries of $A$: $A_{ij}$, for
$j=i,\ldots,n$ and $A_{ji}$ for $j=i+1,\ldots,n$. Thus, the first
layer consists of the first row and column of $A$, and the $n$-th
layer (also the last layer), is just $A_{nn}$.  We transform $A$ by
layers, $i=1,2,3,\ldots$. At step $i$, let $A'$ be the result of
having dealt already with the first $i-1$ layers. If $A'_{ii}=1$ move
to the next layer, $i+1$. Otherwise, find a $1$ in layer $i$ of $A'$.
If there is no 1, also move on to the next layer, $i+1$. If there is a
1, permute it from position $A_{ij'}$, $j'\in\{i,\ldots,n\}$ to
$A'_{ii}$, or from position $A_{j'i}$, $j'\in\{i+1,\ldots,n\}$. Note
that such a permutation does not disturb the work done in the previous
layers; that is, if $A'_{kk}$, $k<i$, was a 1, it continues being a 1,
and if it was not a 1, then there are no 1s in layer $k$ of $A'$.
\end{proof}

It is Claim~\ref{clm:2} that allows us to bring down the complexity of
the proof of KMM from $\Pi_2^B$ to $\Sigma_1^B$.  As we shall see, by
transforming $A$ into $A'$, so that $A'=PAQ$ where $P,Q$ are
permutation matrices and $A'$ has the diagonal form, we can prove KMM
for $A'$ with just $\Sigma_1^B$ induction, and then by
Lemma~\ref{lem:1} we obtain an $\ELA$ proof of KMM for $A$.  All of
this is made precise in the following Lemma; recall that $\KMM(A,n)$
is defined in~(\ref{eq:KMM2}).

\begin{lem}\label{lem:4}
$\ELA\vdash\forall n\KMM(A,n)$.
\end{lem}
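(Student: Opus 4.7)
My plan is $\Sigma_1^B$ induction on $n$, applied to $\KMM(A,n)$ with $A$ kept as a free parameter. This is legal in $\ELA$: as noted after~(\ref{eq:KMM2}), $\KMM$ is $\Sigma_1^B$, since the matrix quantifiers buried in $\Mincover$ and $\Maxselect$ become bounded existentials when pulled through the implication. The base case $n \leq 1$ is immediate: $l_A = o_A = \Sigma A$.

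For the inductive step, given $A$ of size $n$, I first invoke Claim~\ref{clm:2} to produce permutation matrices $P, Q$ with $A' := PAQ$ in diagonal form. Lemma~\ref{lem:1} then yields $l_A = l_{A'}$ and $o_A = o_{A'}$, so it suffices to prove $l_{A'} = o_{A'}$ for the diagonal-form matrix $A'$. I then split on $A'_{11}$.

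If $A'_{11} = 0$, the diagonal property at $i = 1$ forces $A'_{1j} = A'_{j1} = 0$ for every $j$, so the first row and column of $A'$ vanish entirely. The submatrix $A''$ obtained by deleting row 1 and column 1 inherits the diagonal property and has the same 1-entries as $A'$, so $l_{A'} = l_{A''}$ and $o_{A'} = o_{A''}$; the IH applied to $A''$ (of size $n-1$) closes this case.

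If $A'_{11} = 1$, the same $A''$ still inherits the diagonal property, but row 1 and column 1 of $A'$ may now carry additional 1s. The IH supplies a minimum cover $\alpha''$ and a maximum selection $\beta''$ of $A''$ with $\Sigma\alpha'' = \Sigma\beta''$, and I would extend these to witnesses $\alpha, \beta$ for $A'$ by adjoining appropriate lines to $\alpha''$ (at most row 1 and column 1) and at most one extra entry to $\beta''$ (either $(1,1)$ itself, or a pair of 1s chosen one in row 1 and one in column 1), using the diagonal structure of $A'$ to argue that the additional 1s in row 1 and column 1 are absorbed without violating $\cover$ or $\select$. Because the extension is carried out by explicit matrix operations, the existence of $\alpha$ and $\beta$ is captured by bounded existential matrix quantifiers, so the induction stays within $\Sigma_1^B$.

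The main obstacle is the $A'_{11} = 1$ case: one must resolve the interaction between the extra 1s in row 1 and column 1 of $A'$ and the cover/selection of $A''$ without invoking alternating-path or augmenting-type reasoning, which is precisely what would push the complexity back up to $\Pi_2^B$. The diagonal property of Claim~\ref{clm:2} is what makes a direct, explicit extension possible and hence gives a $\Sigma_1^B$ proof; Lemma~\ref{lem:1} is used silently throughout to move witnesses between $A$ and its diagonal reordering $A'$.
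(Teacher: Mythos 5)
Your skeleton matches the paper's: reduce to diagonal form via Claim~\ref{clm:2} and Lemma~\ref{lem:1}, then induct on $n$ with a case split on the corner entry $a=A'_{11}$. The $a=0$ case is handled correctly and identically to the paper. But the proof is not complete: the $a=1$ case, which you yourself flag as ``the main obstacle,'' is exactly where the paper does its real work, and your proposal stops at ``I would extend these to witnesses \ldots using the diagonal structure to argue that the additional 1s \ldots are absorbed.'' That absorption claim is the theorem, not an observation. The naive extension gives a cover of $A'$ of size at most $l_M+2$ (extend the cover of $M$ and add row~1 and column~1) but a selection of size only $o_M+1$ (adjoin the entry $(1,1)$), and the diagonal property alone does not close that gap of one: the extra 1s in $R$ and $S$ can force the second line, while a second independent 1 for the selection may be unavailable.

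The paper closes the gap with a further case analysis that your proposal is missing. First, if $l_M=n-1$, then $o_A\ge o_M+1\ge n\ge l_A$ because $n$ lines always cover an $n\times n$ matrix. Otherwise $l_M<n-1$, and one fixes a minimum cover of $M$ and asks whether its extension to $A$ already covers all 1s of $R$ or all 1s of $S$: if so, one extra line suffices ($l_A\le l_M+1$) and the selection gains exactly the entry $(1,1)$ ($o_A\ge o_M+1$); if not, there is an uncovered 1 in $R$ \emph{and} an uncovered 1 in $S$, so the selection gains two ($o_A\ge o_M+2$) to match the two extra lines ($l_A\le l_M+2$). Without this trichotomy the counts do not balance. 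Two smaller omissions: you never establish the easy direction $o_A\le l_A$ (Claim~\ref{clm:3}, provable already in $\LA$), which your strategy of exhibiting a cover and a selection of equal size silently relies on to conclude both are optimal; and your phrase ``at most one extra entry to $\beta''$ \ldots or a pair of 1s'' conflates the one-extra and two-extra situations whose separation is precisely the content of the missing case split.
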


We are going to prove Lemma~\ref{lem:4} by induction on $n$, breaking
it down into Claims~\ref{clm:3} and~\ref{clm:4}. Once we have that
$\forall n\KMM(A,n)$, we replace $n$ with $|A|$, and obtain an $\ELA$
proof of~(\ref{eq:KMM}), and thereby a proof of
Theorem~\ref{thm:main}.

From Claims~\ref{lem:1} and~\ref{clm:2} we know that it is sufficient
to prove Lemma~\ref{lem:4} for appropriate $PAQ$, which ensures the
diagonal property spelled out in Claim~\ref{clm:2}.  Thus, in order to
simplify notation, we assume that our $A$ is the result of applying
the permutations; i.e., $A$ has the diagonal property.

\begin{clm}\label{clm:3}
$\LA\vdash o_A\le l_A$.
\end{clm}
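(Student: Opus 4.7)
The plan is to establish the stronger weak-duality statement: for every cover $\alpha$ and every selection $\beta$ of $A$, $\Sigma\beta\le\Sigma\alpha$. Taking $\alpha$ to be a minimum cover and $\beta$ a maximum selection then immediately gives $o_A\le l_A$. The intuition is standard: every $1$ of $\beta$ is forced to be covered by some line of $\alpha$, and because no two $1$s of $\beta$ share a line, distinct $1$s of $\beta$ are charged to distinct lines of $\alpha$.

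To carry this out in $\LA$ I would introduce an auxiliary $2\times n$ 0-1 matrix $X$, dominated entrywise by $\alpha$, with $\Sigma X=\Sigma\beta$. Concretely, set $X(1,i)=1$ iff $\alpha(1,i)=1$ and $\exists j\le n\,[\beta(i,j)=1]$, and $X(2,j)=1$ iff $\alpha(2,j)=1$ and $\exists i\le n\,[\beta(i,j)=1\wedge\alpha(1,i)=0]$. These conditions involve only bounded integer quantifiers over entries of $\alpha$ and $\beta$, so $X$ is defined by a formula free of matrix quantifiers, and $\LA$ is therefore free to reason about it by induction.

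The verification then splits into two pieces. First, $X\le\alpha$ entrywise is immediate from the definition, so $\Sigma X\le\Sigma\alpha$ by the routine $\LA$-fact, provable by integer induction on the index, that $\Sigma$ is monotone on componentwise-dominated 0-1 matrices. Second, $\Sigma X=\Sigma\beta$: each $1$ of $\beta$ at $(i,j)$ is covered by $\alpha$, so either $\alpha(1,i)=1$, contributing to $X(1,i)$, or $\alpha(1,i)=0$ and $\alpha(2,j)=1$, contributing to $X(2,j)$; conversely every $1$ of $X$ comes from a unique $1$ of $\beta$, using the selection property that $\beta$ has at most one $1$ in each row and each column.

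The main obstacle is turning this bijection into an equality of $\Sigma$-values using only $\LA$-induction. I would do this by counting row-by-row in $\beta$: define $f(k)$ to be the sum of entries of $\beta$ lying in the first $k$ rows, and $g(k)$ to be the number of indices $i\le k$ with $X(1,i)=1$ plus the number of indices $j\le n$ whose designated selection entry lies in the first $k$ rows of $\beta$; then $\LA$-induction on $k$ gives $f(k)=g(k)$, and at $k=n$ this is exactly $\Sigma\beta=\Sigma X$. Chaining the two inequalities yields $\Sigma\beta=\Sigma X\le\Sigma\alpha$, as required.
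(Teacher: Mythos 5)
Your argument is correct and rests on the same combinatorial core as the paper's: every $1$ of the selection lies on some line of the cover, no line carries two of them, so the selected $1$s inject into the cover's lines, and both versions realize this injection as a matrix-quantifier-free double count --- which is exactly why plain $\LA$ induction suffices. The difference is in the auxiliary object. The paper builds an $l_A\times o_A$ incidence matrix $A'$ between the lines of the cover and the selected $1$s and chains $o_A=c(A')\le\Sigma A'\le r(A')=l_A$, using that every column of $A'$ contains at least one $1$ and every row at most one. You instead build a $2\times n$ matrix $X$ dominated entrywise by $\alpha$ with $\Sigma X=\Sigma\beta$, and chain $\Sigma\beta=\Sigma X\le\Sigma\alpha$. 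Your $X$ has the advantage of being directly definable by a $\lambda$-term from $\alpha$ and $\beta$ (e.g.\ $X(1,i)=\alpha(1,i)\cdot\Sigma\lambda pq\langle 1,c(\beta),\beta(i,q)\rangle$, which is $0$ or $1$ precisely because $\beta$ is a selection), whereas the paper's $A'$ presupposes an enumeration of the cover's lines and of the selected $1$s, which takes a little more set-up in $\Lang$; the paper's formulation, on the other hand, isolates the ``at least one per column, at most one per row'' counting in a form that is reused almost verbatim in Lemmas~\ref{lem:KMM-to-Menger} and~\ref{lem:KMM-to-Dilworth}. One small point of care in your write-up: in the definition of $g(k)$ the second summand should count only those columns $j$ with $X(2,j)=1$, i.e.\ those whose unique $\beta$-one sits in a row $i$ with $\alpha(1,i)=0$; with that reading your row-by-row induction $f(k)=g(k)$ is a legitimate $\Sigma_0^B$ induction and the proof goes through.
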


\begin{proof}
Given a covering of $A$ consisting of $l_A$ lines, we know that
every~1 we pick for a maximal selection of~1s has to be on one of the
lines of the covering. We also know that we cannot pick more than
one~1 from each line. Thus, the number of lines in the covering
provide an upper bound on the size of such selection, giving us
$o_A\le l_A$. 

We can formalize this argument in $\LA$ as follows: let $A'$ be a
matrix whose rows represent the $l_A$ lines of a covering, and whose
columns represent the $o_A$ 1s no two on the same line.  Let
$A'(i,j)=1\iff$ the line labeled with~$i$ covers the~1 labeled
with~$j$. Then, 
\begin{align*}
o_A&=c(A')\le\Sigma A' \tag{$\ast$}\\
   &=\Sigma_i(\Sigma\lambda pq\langle
	 1,c(A'),A'(i,q)\rangle)\tag{$\ast\ast$} \\
	 &\le\Sigma_i1=r(A')=l_A,
\end{align*}	 
where the inequality in the line labeled by $(\ast)$ can be shown by
induction on the number of columns of a matrix which has the condition
that each column contains at least one~1; and the equality labeled
with $(\ast\ast)$ follows from the fact that we can add all the
entries in a matrix by rows (and $A'$ is such that each row contains
at most one~1).
\end{proof}

We briefly discuss the implications of Claim~\ref{clm:3} for the
provability of variants of the pigeonhole principle (PHP) in $\LA$ in
Section~\ref{sec:future}.

As Claim~\ref{clm:3} shows, $\LA$ is sufficient to prove $o_A\le l_A$;
on the other hand, we seem to require the stronger theory $\ELA$
(which is $\LA$ with induction over $\Sigma_1^B$ formulas) in order to
prove the other direction of the inequality.

\begin{clm}\label{clm:4}
$\ELA\vdash o_A\ge l_A$.
\end{clm}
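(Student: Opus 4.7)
The plan is to prove Claim~\ref{clm:4} by $\Sigma_1^B$ induction on $n=|A|$ in the theory $\ELA$. The statement $o_A \geq l_A$, after unfolding $\Mincover$ and $\Maxselect$ and passing to prenex form, is a $\Sigma_1^B$ formula in the free parameter $A$: the two inner universals $\forall\alpha',\beta'$ become existentials when pulled out across the implication. Throughout I will use Lemma~\ref{lem:1} and Claim~\ref{clm:2} freely to assume that the current matrix has the diagonal property, reapplying the permutation lemma after each reduction if needed. The base case $n=0$ is trivial since both sides vanish.

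For the inductive step, let $A$ be $n\times n$ with the diagonal property, and set $S=\{i : A_{ii}=1\}$ and $T=\{1,\ldots,n\}\setminus S$. The crucial structural consequence of the diagonal property is that every $1$ at position $(i,j)$ satisfies $\min(i,j)\in S$: if $i<j$ and $A_{ij}=1$ then $i\in S$ (otherwise layer $i$ would contain a nonzero entry), and symmetrically for $i>j$. Hence all $1$s of $A$ are confined to the cross of rows and columns indexed by $S$, and the submatrix $A[T,T]$ is entirely zero. I reduce by deleting row and column $n$. If $A_{nn}=1$, the resulting $(n-1)\times(n-1)$ matrix $A'$ still has the diagonal property; the inductive hypothesis supplies a selection $\beta'$ of $A'$ with $\Sigma\beta' \geq l_{A'}$, and I extend $\beta'$ by $(n,n)$ and, if needed, re-route one match along a short alternating step using the $1$s in row $n$ or column $n$. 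If $A_{nn}=0$, the diagonal property forces layer $n$ to be empty, so the only $1$s in row $n$ and column $n$ sit at $S$-indices strictly less than $n$; I pass to $A'$ of size $n-1$ in the same way and reconstruct the larger selection by incorporating those extra $1$s with the same short augmentation.

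The main obstacle is to show that the short augmenting step always suffices to close the gap $l_A - l_{A'}$, which can be $0$, $1$, or $2$. This is exactly where the diagonal property is indispensable: it bounds the combinatorial length of any re-routing, so that the augmented selection can be described by a single $\Sigma_1^B$ existential witness, rather than as the fixed point of an alternating-path search that would demand $\Pi_2^B$ induction. I anticipate needing an auxiliary case split on whether the $1$s of row $n$ and column $n$ meet $S$ at coincident or distinct indices, together with further invocations of Lemma~\ref{lem:1} to bring the augmenting step into an explicitly exhibitable form before applying the inductive hypothesis.
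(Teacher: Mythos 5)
There is a genuine gap, and it comes from peeling off the \emph{wrong} corner of the matrix. The paper's induction removes the \emph{first} row and column, writing $A$ as in~(\ref{eq:submatrix}) with corner $a=A_{11}$, remainder row $R$, remainder column $S$, and principal submatrix $M$. For $i=1$ the diagonal property gives the strongest possible dichotomy: either $a=1$, or the \emph{entire} first layer --- i.e.\ all of $a$, $R$, $S$ --- is zero, so the reduction to $M$ is literally trivial. When $a=1$, the entry $(1,1)$ lies in a row and column disjoint from $M$, so it can be adjoined to \emph{any} selection of $M$ with no conflict and no re-routing; the remaining work is entirely on the cover side, handled by cases on whether $l_M=n-1$ or $l_M<n-1$ and, in the latter case, on whether the extension of a minimum cover of $M$ already covers all of $R$ or all of $S$ (Sub-subcases (1-b-i) and (1-b-ii)), which pins $l_A$ between $l_M$ and $l_M+2$ and matches each possibility with a selection of the corresponding size. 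Your proposal instead deletes the \emph{last} row and column, where the diagonal property for $i=n$ says only that $A_{nn}\in\{0,1\}$: as you yourself observe, when $A_{nn}=0$ row $n$ and column $n$ may still contain $1$s at positions $(n,j)$ and $(j,n)$ with $j\in S$, $j<n$. Those $1$s live in columns and rows that $A'$ also uses, so adjoining them to a maximum selection of $A'$ genuinely requires an alternating/augmenting path.

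Your claim that the diagonal property ``bounds the combinatorial length of any re-routing'' is the unsupported step, and it is false in general: if $S=\{1,\dots,n-1\}$ the diagonal property imposes \emph{no} constraint on the $S\times S$ block beyond its all-ones diagonal, so the augmenting paths needed to absorb a $1$ from row $n$ can be as long as in an arbitrary bipartite graph --- this is exactly the $\Pi_2^B$-flavoured search the whole construction is designed to avoid. A second, related omission is the cover-side bookkeeping: even in your easy case $A_{nn}=1$, adjoining $(n,n)$ only yields $o_A\ge l_{A'}+1$, whereas $l_A$ may equal $l_{A'}+2$ (when both row $n$ and column $n$ contain $1$s missed by the extension of a minimum cover of $A'$), so you would then need \emph{two} new independent $1$s taken from row $n$ and column $n$ at indices below $n$ --- again a conflict-resolution problem your sketch does not solve. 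Both difficulties disappear if you run the induction from the top-left corner as the paper does.
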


\begin{proof}
By induction on $n=|A|$. We assume throughout the proof that the
matrix has the diagonal property (see Definition~\ref{dfn:dc}). Let
\begin{equation}\label{eq:submatrix}
A=\left[\begin{array}{c|ccc}
a & & R & \\\hline
&&& \\
S & & M & \\
&&&
\end{array}\right],
\end{equation}
where $a$ is the top-left entry, and $M$ the principal sub-matrix of
$A$, and $R$ (resp.\ $S$) is $1\times(n-1)$ (resp.\ $(n-1)\times 1$). 
From the diagonal property we know that one of
the following two cases is true:

{\bf Case 1.} $a=1$

{\bf Case 2.} $a,R,S$ consist entirely of zeros

In the second case, $o_A\ge l_A$ follows directly from the induction
hypothesis, $o_M\ge l_M$, as $o_A=o_M\ge l_M=l_A$. Thus, it is the
first case, $a=1$, that is interesting. The first case, in turn, can
be broken up into two subcases: $l_M=n-1$ and $l_M<n-1$.

{\bf Subcase (1-a)} $l_M=n-1$

By induction hypothesis, $o_M\ge l_M=n-1$. We also have that $a=1$,
and $a$ is in position $(1,1)$, and hence no matter what subset of 1s
is selected from $M$, none of them lie on the same line as $a$.
Therefore, $o_A\ge o_M+1$. Since $o_M\ge n-1$, $o_A\ge n$, and since
we can {\em always} cover $A$ with $n$ lines, we have that 
$n\ge l_A$, and so $o_A\ge l_A$.

{\bf Subcase (1-b)} $l_M<n-1$

Consider a covering of $M$ of size $l_M<n-1$. We break this case down
into two further sub-subcases, depending on whether this particular
covering has, or has not, the following property: when all lines of
the covering (of $M$) are extended to the entire matrix $A$, they
cover all the 1s in $S$, or they cover all the 1s in $R$.

For the sake of formalizing the proof in $\LA$, we define the
notion of ``extension'' more precisely.

\begin{dfn}\label{dfn:extension}
Let $A$ and $M$ be as in~(\ref{eq:submatrix}), and let $C_M$ be a set
of lines of $M$, i.e., $C_M$ consists of rows $i_1,i_2,\ldots,i_k$,
and columns $j_1,j_2,\ldots,j_\ell$.  The {\em extension} of $C_M$ to
$C_A$ is simply the set of rows $i_1+1,i_2+1,\ldots,i_k+1$, and the
set of columns $j_1+1,j_2+1,\ldots,j_\ell+1$.
\end{dfn}

{\bf Sub-subcase (1-b-i)}
The given covering of $M$ of size $l_M$, when extended to the full
matrix $A$, covers all the 1s in $S$, or covers all the 1s in $R$ (or
possibly both). 

By induction hypothesis, $o_M=l_M$, and so we can pick $l_M$ 1s in
$M$, no two of them on the same line, plus $a$, to have a selection of
1s, no two on the same line, of size $l_M+1$. Thus, $o_A\ge l_M+1$. On
the other hand, there is a covering of $A$ consisting of the lines
covering $M$ (now extended to all of $A$), plus the first column of
$A$ (if it was $R$ that was fully covered by the extension), or the
first row of $A$ (if it was $S$ that was fully covered by the
extension). Note that if both $R,S$ were fully covered by the
extension, just pick arbitrarily the first row or the first column of
$A$, as all that matters in this case is to cover $a=1$. Therefore,
$l_M+1\ge l_A$, and so $o_A\ge l_A$.

{\bf Sub-subcase (1-b-ii)}
The given covering of $M$, of size $l_M$, 
when extended to $A$, it leaves some 1 in $R$ uncovered,
and some $1$ in $S$ uncovered. 

In that case, $o_A\ge o_M+2$, where we
picked a covering of $M$, extended it to $A$, and picked a selection
of 1s of size $o_M$, no two on the same line, plus a 1 in $R$
uncovered by the extension, and a 1 in $S$ uncovered by the extension,
to create a selection of 1s in $A$ of size $o_M+2$, no two on the same
line.  On the other hand, $l_M+2$ lines cover all of $A$: the
extension of the cover of $M$ of size $l_M$ plus the first row and
first column of $A$. Thus $o_M+2=l_M+2\ge l_A$. Altogether, $o_A\ge l_A$.

This ends the proof of Claim~\ref{clm:4}.
\end{proof}

\section{Induced Algorithm}

The standard KMM Theorem is stated as an implication
(see~(\ref{eq:KMM})), and hence it makes no assertions about the
actual existence of a minimal covering or maximal selection of 1s, let
alone how to compute them. It only says that if they do exist, they
are equal. However, the proof of Lemma~\ref{lem:4} suggests an
algorithm for computing both.

Note that computing a minimal cover can be
accomplished in polytime with the well-known Karp-Hopcroft (KH) algorithm
(see~\cite{hopcroft73}) as follows: First use the KH
algorithm to compute a ``maximal matching,'' which in this case is
simply a maximal selection of~1s (when we view $A$ --- in the natural way
--- as the adjacency matrix of a bipartite graph).
In~\cite{soltys-fernandez-2012}, the authors show how to convert, in
linear time, a maximal selection into a minimal cover. 

Certainly the correctness of the algorithms mentioned in the above
paragraph can be shown in $\ELA$ (as it captures polytime reasoning
--- see~\cite{solcoo}), and so it follows that we can prove in $\ELA$
the existence of a minimal cover and maximum selection.  Therefore,
$\ELA$ can prove something stronger than~(\ref{eq:KMM}).  Namely, it
can not only show that {\em if} we have a minimal cover and a maximal
selection, then they have the same size, but rather, that there {\em
always exists} a minimal cover and maximal selection, and the two
are of equal size. 

However, instead of doing the heavy lifting necessary to formalize the
correctness of HK and~\cite{soltys-fernandez-2012} in $\ELA$, we present a
new simple polytime algorithm for computing minimal covers based on
the proof of Lemma~\ref{lem:4}. Note that a similar argument would
show the existence of a polytime algorithm for maximal selection ---
we leave that to the reader.

The algorithm works as follows: given a 0-1 matrix $A$, we first put
$A$ in the diagonal form (see Definition~\ref{dfn:dc}). We now work
with $A$ which is assumed to be in diagonal form and proceed by
computing recursively $l_M$, the size of a minimal cover of $M$, where
$M$ is the principal submatrix of $A$.  Keeping in mind the form of
$A$ given by~(\ref{eq:submatrix}), we have the following cases:

{\bf Case 1.} If $a=0$ (in which case $R,S$ are also zero, by the fact
that $A$ has been put in diagonal form), then $l_A=l_M$, and proceed
to compute the minimal cover $C$ of $M$; output $C'$, the extension of
$C$ (see Definition~\ref{dfn:extension}).

{\bf Case 2.} If $a\neq 0$, we first examine $R$ to see if the matrix
$M'$, consisting of the columns of $M$ minus those columns of $M$ which
correspond to 1s in $R$, has a cover of size $l_M-\Sigma R$ (of
course, if $l_M<\Sigma R$, then the answer is ``no''). 

If the answer is ``yes'', compute the minimal cover of $M'$, $C_{M'}$.
Then let $C_M$ be the cover of $M$ consisting of the lines in $C_{M'}$
properly renamed to account for the deletion of columns that
transformed $M$ into $M'$, plus the columns of $M$ corresponding the
the 1s in $R$. Then, $C_A$ is the result of extending $C_M$ and adding
the first column of $A$.

If the answer is ``no'', repeat the same with $S$: let $M'$ be the
result of subtracting from $M$ the rows corresponding to the rows with
1s in $S$. Check whether $M'$ has a cover of size $l_M-\Sigma S$. If
the answer is ``yes'' then build a cover for $A$ as in the $R$-case. 

If the answer is ``no'', then compute any minimal cover for $M$,
extend it to $A$, and add the first row and column of $A$; this
results in a cover for $A$.

At the end, we apply the permutations $P,Q$ that converted $A$ to the
diagonal form, to the final $C$, and output that as the minimal cover
of the original $A$. As was mentioned above, a similar polytime
recursive algorithm can compute a maximal selection of 1s; we leave
that to the reader.

\section{Related theorems}

In this section we are going to prove that the various reformulations
of KMM, arising in graph theory and partial orders, can be proven
equivalent to KMM in low complexity ($\LA$), and therefore they also
have feasible proofs. We state this as the following theorem:

\begin{thm}\label{thm:equivalences}
The theory $\LA$ proves the equivalence of KMM, Menger's (restricted),
Hall's and Dilworth's Theorems.
\end{thm}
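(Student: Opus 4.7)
The plan is to establish, within $\LA$, the pairwise equivalence of $\KMM$ with each of Menger's (restricted), Hall's, and Dilworth's Theorems by exhibiting explicit, $\Lang$-term-definable reductions on matrices in each direction, and by verifying their correctness with $\Sigma_0^B$ induction only. The guiding principle is that each of the four theorems expresses the same min--max duality between a combinatorial ``selection'' object and a ``covering'' object in a structure that is already coded as a 0-1 matrix; once the codings are fixed, $\Sigma\alpha$, $\Sigma\beta$ and the other min/max parameters translate into one another by straightforward term manipulations. Since the four theorems are all $\Sigma_1^B$ statements of the form ``cover parameter equals selection parameter,'' the equivalences reduce to proving identities between sums of matrix entries under explicit rearrangements, which is exactly what $\LA$ handles well (cf.\ the $\Sigma$-manipulations used in the proof of Lemma~\ref{lem:1}).

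For $\KMM \leftrightarrow \Hall$, I would identify a bipartite graph with its biadjacency matrix $A$. In one direction, from a minimum cover $\alpha$ of size less than $r(A)$ extract the set $T$ of rows not chosen by $\alpha$; the columns chosen by $\alpha$ then contain all neighbors of $T$, giving a term-definable Hall witness. Conversely, a system of distinct representatives $\beta$ is already a selection of $1$s of $A$ with $\Sigma\beta=r(A)$, and from a Hall violator one builds a cover of size less than $r(A)$ directly. For $\KMM \leftrightarrow \Menger$ (restricted), as the introduction already notes, the restricted version is essentially the bipartite case: source-to-sink internally disjoint paths correspond to a selection of $1$s in the edge matrix, and a vertex cut corresponds to a line cover. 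Both directions are term-to-term and the correctness is a one-line check.

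For $\KMM \leftrightarrow \Dilworth$, represent a finite poset $P$ on $n$ elements by its strict order matrix $A_P$, and form the ``split'' bipartite matrix $B$ with $B(i,j)=A_P(i,j)$; that is, put one copy of $P$ on the rows and one on the columns. The standard Fulkerson-style correspondence sends a selection of $1$s in $B$ of size $n-k$ to a decomposition of $P$ into $k$ disjoint chains (by chaining each element to its image under $\beta$), and a line cover of $B$ of size $n-k$ to an antichain of size $k$ (the indices missed by both the row-side and column-side of $\alpha$). Both maps are $\Lang$-definable, and the equalities $\Minchain = n - \Maxselect(B,\cdot)$ and $\Maxantichain = n - \Mincover(B,\cdot)$ are $\Sigma_0^B$-provable by counting matrix entries; combining them with $\KMM$ on $B$ yields Dilworth, and running the same computation in reverse gives the converse.

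The main obstacle is the Dilworth direction: one must check that the ``chain-by-gluing'' construction actually yields \emph{disjoint} chains rather than overlapping ones, and that the complement of an $\alpha$-cover really is an antichain in $P$. Both facts depend on the antisymmetry and transitivity of $A_P$ and on the $\select$/$\cover$ axioms; fortunately, each required assertion is quantifier-free over matrix entries (``if $\beta(i,j)=1$ and $\beta(j,k)=1$ then $A_P(i,k)=1$,'' and similar), and hence provable in $\LA$ by straightforward induction over row and column indices, matching the style of Lemma~\ref{lem:1}. With these checks in hand, the four $\Sigma_1^B$ statements collapse to restatements of one another in $\LA$.
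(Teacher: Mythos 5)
Your overall architecture (pairwise term-definable reductions between matrix codings, verified with low-complexity induction) matches the paper, and your Dilworth argument is a genuinely different route: you use the classical Fulkerson ``split'' construction, where a selection of size $n-k$ in the strict-order matrix glues into $k$ disjoint chains and the elements missed by a cover of size $n-k$ form an antichain of size $k$. The paper instead proves $\KMM\ra\Dilworth$ via a chains-by-elements incidence matrix and proves $\Dilworth\ra\KMM$ indirectly, by deriving Hall from Dilworth (building a height-two poset $x_i<S_j\iff A(i,j)=1$ whose $n$-chain partition is an SDR) and then invoking $\Hall\ra\KMM$. Your route is more symmetric and gives the numerical identity $\kappa=n-\Sigma\beta$ directly, at the cost of the disjointness/antisymmetry checks you correctly flag.

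However, there are two genuine gaps. First, your $\Hall\ra\KMM$ direction only establishes the \emph{perfect} case: ``SDR exists iff the minimum cover has size $n$.'' That does not yield $l_A=o_A$ when both quantities are strictly less than $n$, which is the general statement of KMM. The paper closes this by permuting $A$ so that the $e$ covering rows and $f$ covering columns come first, partitioning $A$ into blocks $A_1,\dots,A_4$, proving $\Unionprop(A_2)$ and $\Unionprop(A_3^t)$, and applying Hall \emph{twice} to get $o_A\ge o_{A_2}+o_{A_3}\ge e+f=l_A$. You need this (or a defect form of Hall) for the reduction to go through. Second, you dismiss the Menger equivalence as ``a one-line check,'' but the direction $\KMM\ra\Menger'$ is not: given a maximal family $\beta$ of disjoint paths and a minimal cut $\gamma$, the path-versus-cut-edge incidence matrix need not have a $1$ in every column, so KMM cannot be applied directly. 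The paper's Claim~\ref{clm:7} repairs this with an exchange argument (swapping a cut edge $e$ not on any path of $\beta$ for an edge $e'$ shared by a stray path and a path of $\beta$), and this is precisely the step where the hypothesis that $x,y$ is a \emph{restricted} pair is used. Your proposal never engages with why the restriction is needed, which is the one substantive obstacle in that equivalence.
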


The proof of consists of
Lemmas~\ref{lem:Menger-to-KMM} and~\ref{lem:KMM-to-Menger}, showing
the equivalence of KMM and a restricted version of Menger's
Theorem in Subsection~A;
Lemmas~\ref{lem:KMM-to-Hall} and
\ref{lem:Hall-to-KMM}, showing the equivalence of KMM and Hall's
Theorem in Subsection~B;
Lemmas~\ref{lem:KMM-to-Dilworth} and~\ref{lem:Dilworth-to-KMM}, 
showing the equivalence of KMM and
Dilworth's Theorem in Subsection~C. Each subsection starts with a
description of how to formalize the given Theorem, followed by the two
Lemmas giving the two directions of the equivalence.

\subsection{Menger's Theorem}\label{sec:menger}

Given a graph $G=(V,E)$, an $x,y$-{\em path} in $G$ is a sequence of
distinct vertices $v_1,v_2, \dots, v_n$ such that $x=v_1$ and $y=v_n$
and for all $1\leq i <n, (v_i,v_{i+1})\in E$.  The vertices
$\{v_2,\dots,v_{n-1}\}$ are called {\em internal vertices}; we say
that two $x,y$-paths are {\em internally disjoint} if they do not have
internal vertices in common.

Given two distinct vertices $x,y\in V$, we say that $S\subseteq E$ is
an {\em $x,y$-cut} if there is no path from $x$ to $y$ in the graph
$G'=(V,E-S)$.  Let $\kappa(x,y)$ represent the size of the smallest
$x,y$-cut, and let $\lambda(x,y)$ represent the size of the largest
set of pairwise internally disjoint $x,y$-paths. 

Menger's theorem states that for any graph $G=(V,E)$, if $x,y \in V$
and $(x,y) \notin E$, then the minimum size of an $x,y$-cut equals the
maximum number of pairwise internally disjoint $x,y$-paths. That is,
$\kappa(x,y)=\lambda(x,y)$. For more details on Menger's Theorem turn
to~\cite{Menger-1927,Goring-2000,pym-1969}. 
Menger's Theorem is of course the familiar Min-Cut Max-Flow Theorem
where all edges have capacity~1.

As was noted earlier, we do not show the equivalence of KMM with the
standard version of Menger's Theorem, but rather with a {\em
restricted} version. Since this restriction is important, we give it
in a definition.

\begin{dfn}\label{dfn:restricted}
Given a graph $G=(V,E)$, we say that a pair of vertices $x,y\in V$ is
{\em restricted} if any $x,y$-path shares edges with at most one other
$x,y$-path.
\end{dfn}

The intuition behind this definition is that given an $x,y$-restricted
pair, there is little ``redundancy'' in the paths between $x$ and $y$. 

We now show how to state Menger's theorem in $\Lang$. We start by
defining the $\Sigma_0^B$ predicate $\Path(A,x,y,\alpha)$, which
states that $\alpha$ encodes the internal vertices of a path from $x$
to $y$ in $A$. We define Path by parts; first we state that $\alpha$
has at most one~1 in each row and column:
\begin{equation}\label{eq:path_part1}
\begin{split}
(\forall l &\leq n-2)
[\Sigma \lambda_{ij}\langle 1,n-2,\alpha(l,j)\rangle =1 \\
&\wedge \Sigma \lambda_{ij}\langle n-2,1,\alpha(i,l)\rangle =1]
\end{split}
\end{equation}
Then we say that if the $l$-th node is $p$ and $l+1$-th node is $q$,
then there is an edge between $p$ and $q$:
\begin{equation}\label{eq:path_part2}
\begin{split}
(\forall &l,p,q \leq n-3) \\ 
&(\alpha(l,p)=1 \wedge \alpha(l+1,q)=1) \ra A(p,q)=1
\end{split}
\end{equation}
Note that in general different paths are of different lengths; this
can be dealt with in a number of ways: for example, by padding
$\alpha$ with repetitions of the last row (so that each $\alpha$ has
exactly $n-2$ rows).  We assume that this is what we do, and the
reader can check that $\mathcal{L}_{\LA}$ can express this easily.

If $i$ is the first intermediate node then $(x,i) \in E$, and if $i$
is the last intermediate node then $(i,y) \in E$:
\begin{equation}\label{eq:path_part3}
\begin{split}
& \alpha(1,i)=1 \ra A(x,i)=1 \\
& \wedge\alpha(n-2,i)=1 \ra A(i,y)=1
\end{split}
\end{equation}

Putting it all together, the $\Sigma_0^B$ formula expressing Path is
given by the conjunction of $A(x,y)=0$ together with the above
properties, i.e.,
\begin{equation}\label{eq:path}
\Path(A,x,y,\alpha):= (\ref{eq:path_part1}) \wedge 
(\ref{eq:path_part2}) \wedge
(\ref{eq:path_part3}) \wedge
A(x,y)=0. 
\end{equation}
Finally, we state that two paths $\alpha,\alpha'$ are internally
disjoint:
\begin{equation}\label{eq:disj_paths}
\begin{split}
&\Disjoint(A,x,y,\alpha,\alpha'):= \\
& \quad\Path(A,x,y,\alpha) \wedge \Path(A,x,y,\alpha')\\
& \quad\wedge (\forall i\leq n-2 \forall j\leq n-2) 
(\alpha(i,j) \cdot \alpha'(i,j)=0)
\end{split}
\end{equation}
We leave stating that $x,y$ is a restricted pair to the reader.

We must be able to talk about a collection of paths; the 0-1 matrix
$\beta$ will encode a collection of paths
$\alpha_1,\alpha_2,\ldots,\alpha_\lambda$:
\begin{equation}\label{eq:collection}
\beta=\begin{array}{|c|c|c|c|c|}
\hline
\beta[1]=\alpha_1 & 
\beta[2]=\alpha_2 & 
\dots & 
\beta[\lambda]=\alpha_\lambda 
\\\hline
\end{array}
\end{equation}
so that $\beta$ is a matrix of size $(n-2)\times \lambda(n-2)$.
Each $\beta[i]$ can be defined thus:
$$
\beta[i]:= \lambda_{pq}\langle n-2,n-2,\beta(p, (i-1)(n-2) + q) \rangle.
$$
We are interested in pairwise disjoint collections of paths:
\begin{equation}\label{eq:coll_disj_paths}
\begin{split}
& \Collectdisj(A,x,y,\beta,\lambda):=\\
& \forall i\leq \lambda \ \Path(A,x,y,\beta[i]) \ \wedge \\
& (\forall i\neq j \leq \lambda ) \ \Disjoint(A,x,y,\beta[i],\beta[j])
\end{split}
\end{equation}
The following formula expresses $\lambda(x,y)$ for a given $A$; note
that it is a $\Pi_2^B$ formula:
\begin{equation}\label{eq:MaxDisj}
\begin{split}
&\Maxdisj(A,x,y,\lambda):= \\ 
&(\exists \beta \leq (n-2)\lambda) \ 
\Collectdisj(A,x,y,\beta,\lambda) \ \wedge \\
& (\forall \alpha \leq n-2) (\Path(A,x,y,\alpha) \ra 
\exists i\leq \lambda \ \alpha=\beta[i]) 
\end{split}
\end{equation}

Likewise, we need to formalize $\kappa(x,y)$; we start with a 0-1
matrix $\gamma$ expressing a cut in $A$:
\begin{equation}\label{eq:cut}
\begin{split}
&\Cut(A,\gamma):=\\
&(\forall i \leq n-2)(\forall j\leq n-2) (\gamma(i,j)=1 \ra A(i,j)=1) 
\end{split}
\end{equation}
which says that every edge of $\gamma$ is an edge of $A$, and it
defines the cut implicitly as the set of edges in $A$ but no in
$\gamma$.  Now, the following $\Sigma_2^B$ formula expresses that
there is an $x,y$-cut of size $\kappa$ in $A$:
\begin{equation}\label{eq:cut_size}
\begin{split}
& \Cutsize(A,x,y,\kappa):= \\
&  \exists \gamma \leq (n-2) \Cut(A,\gamma) \wedge \Sigma\gamma=\kappa
\wedge (\forall \alpha \leq n-2)\\
& \neg \Path(\lambda_{pq}\langle n-2,n-2, A(p,q) - 
\gamma(p,q)\rangle,x,y,\alpha),
\end{split}
\end{equation}
and the minimum number of edges in an $x,y$-cut can be expressed
with a formula that is a conjunction of a $\Sigma_2^B$ formula with a
$\Pi_2^B$ formula, yielding therefore a formula
in $\Sigma_3^B\cap\Pi_3^B$:
\begin{equation}\label{eq:mincut}
\begin{split}
& \Mincut(A,x,y,\kappa):= \\
& \Cutsize(A,x,y,\kappa) \wedge \neg \Cutsize(A,x,y,\kappa-1) 
\end{split}
\end{equation}
Putting it all together, we can state Menger's theorem in $\Lang$
with a $\Sigma_3^B\cap\Pi_3^B$ formula as follows:
\begin{equation}\label{eq:mengers}
\begin{split}
& \Menger(A):=\\
& \Maxdisj(A,x,y,\lambda) \wedge \Mincut(A,x,y,\kappa)
 \rightarrow \lambda = \kappa
\end{split}
\end{equation}
(Note that if a formula is in $\Sigma_3^B\cap\Pi_3^B$, then its
negation is still in $\Sigma_3^B\cap\Pi_3^B$.)

Let $\Menger'$ be the restricted version of Menger's Theorem, i.e.,
one where $x,y$ is a restricted pair, as in
Definition~\ref{dfn:restricted}.  We can now state the main result of
this section.

\begin{lem}\label{lem:Menger-to-KMM}
$\LA\cup\Menger'\vdash\KMM$.
\end{lem}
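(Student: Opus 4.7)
The plan is to reduce KMM to the restricted Menger axiom.  Given a $0$-$1$ matrix $A$ together with the hypotheses $\Mincover(A,\alpha)$ and $\Maxselect(A,\beta)$ of $\KMM(A,n)$, the goal is to build from $A$ a graph $G_A$ and a pair of distinguished vertices $x,y\in V(G_A)$ with three properties: (i)~the pair $(x,y)$ is restricted in the sense of Definition~\ref{dfn:restricted}; (ii)~the maximum number of pairwise internally disjoint $x,y$-paths equals $o_A=\Sigma\beta$; and (iii)~the minimum $x,y$-cut has size $l_A=\Sigma\alpha$.  A single application of $\Menger'$ to this $G_A$ then yields $\Sigma\alpha=\Sigma\beta$, which is precisely the conclusion of $\KMM(A,n)$.

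The skeleton of the construction is the standard source--sink bipartite graph: take $V=\{x,y\}\cup\{r_1,\dots,r_n\}\cup\{c_1,\dots,c_n\}$, with edges $(x,r_i)$, $(c_j,y)$, and $(r_i,c_j)$ whenever $A_{ij}=1$.  The short $x,y$-paths $x\to r_i\to c_j\to y$ are in bijection with the $1$-entries of $A$, and two such paths are internally disjoint iff the corresponding $1$s lie in distinct rows and distinct columns, so a selection $\beta$ gives $\Sigma\beta$ pairwise internally disjoint paths; conversely, any collection of pairwise internally disjoint $x,y$-paths (even ones of larger length) can be turned into a selection of the same size by extracting a single $(r_i,c_j)$ edge from each path, using the fact that distinct disjoint paths cannot share a row- or a column-vertex.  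This establishes~(ii).  For~(iii), a line cover $\alpha$ of size $\Sigma\alpha$ yields an edge cut of the same size by cutting $(x,r_i)$ for each row in $\alpha$ and $(c_j,y)$ for each column; conversely, any edge cut $E_0$ can be converted, without size increase, into a line cover of $A$ by replacing each ``interior'' cut edge $(r_i,c_j)$ with the row $r_i$ and then reading off the rows and columns appearing in the cut.  Each of these translations is polynomial and formalizable in $\LA$ using only $\Sigma_0^B$ induction.

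The principal obstacle is arranging~(i): in the naive graph, any two short paths through a common row share the edge $(x,r_i)$, so as soon as some row (or column) carries three or more $1$s the pair $(x,y)$ fails to be restricted.  I plan to repair this by enlarging the skeleton with auxiliary row- and column-gadgets, replacing each row vertex by a bounded-incidence structure whose internal vertices still force two paths through that row to collide on a shared internal vertex (preserving the selection correspondence in~(ii)) but in which edge-sharing between paths is pairwise bounded, and by symmetrically gadgetizing the columns.  The correspondence~(iii) must then be re-established on the gadgetized graph by mapping edge cuts through the gadgets back to line covers of~$A$.  Verifying simultaneously in $\LA$ that the enlarged $G_A$ carries a restricted $(x,y)$-pair and still satisfies (ii) and (iii) is the technical heart of the proof; once in place, one invocation of the $\Menger'$ axiom closes the argument.
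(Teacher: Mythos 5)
Your construction is the same one the paper uses: the source--sink bipartite graph (the paper's $A''$), the bijection between selections of $A$ and collections of pairwise internally disjoint $x,y$-paths, and the translation between line covers and edge cuts --- these are exactly the paper's Claim~\ref{clm:equivs}, which the paper also leaves to the reader. Where you diverge is on condition (i). The paper simply asserts that ``it is easy to check that $x,y$ is a restricted pair'' in the graph of $A''$ and stops; you instead observe --- correctly, as far as I can tell --- that once some row (or column) of $A$ carries three or more 1s, the three short paths through that row pairwise share the edge $(r_i,y)$ (resp.\ $(x,c_j)$), so each of them shares edges with at least two other $x,y$-paths and Definition~\ref{dfn:restricted} fails. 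So you have put your finger on the one step the paper does not actually argue.

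The problem is that your repair is the entire content of the divergence and it is not supplied, and there is a structural reason to doubt that the proposed gadget can exist. The paper's $\Menger$ statement matches \emph{internally disjoint} paths against \emph{edge} cuts. Your row gadget is required to (a) force all paths through row $i$ to collide at a shared internal vertex, so that at most one of them survives into a disjoint family, while (b) giving distinct paths pairwise-bounded edge overlap. But a vertex bottleneck with no shared edges is not an edge bottleneck: if $k$ paths pass through a common vertex $v$ on edge-disjoint routes, severing them costs $k$ cut edges rather than one, so the minimum cut of the gadgetized graph no longer equals $l_A$ and correspondence (iii) breaks (indeed, for such graphs the internally-disjoint-paths/edge-cut min--max is false outright, so $\Menger'$ would not even apply usefully). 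Conversely, funnelling the row's paths through a single cuttable edge to preserve (iii) reinstates exactly the restrictedness failure you set out to remove. As written, then, the proposal has a genuine gap at its self-declared technical heart; to match the paper you would either have to justify restrictedness of the unmodified $A''$ directly (which the paper does not do, and which appears false in general) or exhibit a gadget that escapes the tension above.
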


\begin{proof}
Note that the implication resembles the statement of KMM, but the
difference is that in KMM the antecedent is a conjunction of two
$\Pi_1^B$ formulas (and hence it is a $\Pi_1^B$ formula), whereas in
Menger's theorem, the antecedent is a $\Sigma_3^B\cap\Pi_3^B$ formula.

Suppose
that we have $\Mincover(A,\alpha)\wedge\Maxselect(A,\beta)$, the
antecedent of KMM (see~\ref{eq:KMM}). Using Menger's theorem
(see~\ref{eq:mengers}) we want to conclude that
$\Sigma\alpha=\Sigma\beta$. We do so by restating ``covers and
selections'' of $A$ as ``cuts and paths'' of a related matrix $A'$
defined as: $A'$ is a 0-1 matrix of size
$|A|+1$, with entries:
$$
A'(i,j)=\begin{cases}
A(i,j) & \text{for $1\leq i,j\leq |A|$}\\
1 & \text{one of $\{i,j\}$ equals $|A|+1$}\\
0 & \text{both of $\{i,j\}$ equal $|A|+1$}\\
\end{cases}
$$
Note that $A'$ can be stated succinctly as a term of $\Lang$:
\begin{equation*}\label{eq:df_Aplusxy}
\begin{split}
& A':= \lambda{ij}\langle r(A)+1, c(A)+1,\\
& \cond(1\leq i,j\leq |A|, A(i,j), \cond(i=j=|A|+1,0,1))\rangle
\end{split}
\end{equation*}
The point is that when we view $A$ as representing a bipartite graph
(with rows representing $V_1$ and columns representing $V_2$ and
$A(i,j)=1$ iff there is an edge $(i,j)\in V_1\times V_2$), then $A'$
represents a graph where two more vertices are added ($x=|A|+1$ and
$y=|A|+1$, the first row and column of $A$, resp.) and $x$ is
connected to every vertex in $V_2$ and $y$ is connected to every
vertex in $V_1$, and $x,y$ are not connected to any other vertices.

Also, a maximal selection in $A$ corresponds to a maximal matching in
the related graph, and a minimal cover in $A$ corresponds to a minimal
cover in the related graph (recall that a cover in a graph is a subset
of vertices so that every edge has at least one end-point in this
subset). Furthermore, a maximal matching in the graph related to $A$
corresponds to a maximal subset of internally disjoint paths in the
graph related to $A'$; similarly, a minimal cover in the graph
related to $A$ corresponds to a minimal cut in the graph related to
$A'$.

Finally, let:
$$
A''=\left[\begin{array}{cc}0 & A' \\ (A')^T & 0 \end{array}\right],
$$
that is, $A''$ is the adjacency matrix of the graph related to $A'$
viewed as a normal graph (i.e., not bipartite).

\begin{clm}\label{clm:equivs}
$\LA$ proves the following:
\begin{itemize}
\item $\Mincover(A,\alpha)\lra\Mincut(A'',x,y,\Sigma\alpha)$
\item $\Maxselect(A,\beta)\lra\Maxdisj(A'',x,y,\Sigma\beta)$
\end{itemize}
\end{clm}

The proof of Claim~\ref{clm:equivs} does not require
induction and we leave it to the reader. By Menger's Theorem it
follows directly that $\Sigma\alpha=\Sigma\beta$ which also finishes
the proof of KMM.

It is easy to check that $x,y$ is a restricted pair
(Definition~\ref{dfn:restricted}) in the graph related
to $A''$.
\end{proof}

\begin{lem}\label{lem:KMM-to-Menger}
$\LA\cup\KMM\vdash\Menger'$.
\end{lem}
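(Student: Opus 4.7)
The plan is to invert the bipartite reduction used in Lemma~\ref{lem:Menger-to-KMM}: given a graph $A$ with a restricted pair $(x,y)$, I would construct a 0-1 bipartite matrix $B$ whose minimum cover size and maximum selection size coincide with $\kappa(x,y)$ and $\lambda(x,y)$ respectively, and then invoke $\KMM$ on $B$. Assume the antecedent $\Maxdisj(A,x,y,\lambda)\wedge\Mincut(A,x,y,\kappa)$, so that a witness collection $\beta$ of internally disjoint paths of size $\lambda$ and a witness cut $\gamma$ of size $\kappa$ are available. Let $V_x:=\{v:A(x,v)=1\}$ and $V_y:=\{w:A(w,y)=1\}$, and let $B$ be the $|V_x|\times|V_y|$ matrix with $B(v,w)=1$ iff some $x,y$-path of $A$ has second vertex $v$ and penultimate vertex $w$. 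For the argument below we only need the 1s of $B$ witnessed by paths in $\beta$, so the existence predicate inside $B$ can be replaced by a $\Lang$-term extracted directly from $\beta$.

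First I would dispatch the easy direction $\lambda\le\kappa$. Any two internally disjoint $x,y$-paths have distinct second vertices (the second vertex is internal), so the edges of the form $(x,v)$ that they use are distinct, and similarly at the other end; hence any $x,y$-cut must delete at least one edge from each path in $\beta$. This is a straightforward $\LA$-induction on the number of paths in $\beta$, analogous to the $(\ast)$-step in the proof of Claim~\ref{clm:3}.

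The reverse inequality $\kappa\le\lambda$ is where $\KMM$ enters. From $\beta$ I would build a selection $\beta'$ of $B$ of size $\lambda$ by mapping each path in $\beta$ to a 1 in $B$ at the position (second vertex, penultimate vertex); internal disjointness guarantees that no two of these 1s share a row or a column, so $\select(B,\beta')$ and $\Sigma\beta'=\lambda$. From $\gamma$ I would build a cover $\alpha'$ of $B$ of size at most $\kappa$ by sending every cut edge of the form $(x,v)$ to the row indexed by $v$, every cut edge $(w,y)$ to the column indexed by $w$, and using the restricted hypothesis to exchange any ``middle'' edge of $\gamma$ (one not incident to $x$ or $y$) for an incident edge without increasing the cut size --- this is possible because such a middle edge lies on at most two paths, and the at most one companion path can always be killed by cutting its first or last edge instead. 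Now $\KMM$ applied to $B$ yields $\Sigma\alpha'\ge\Sigma\beta'$, hence $\kappa\ge\lambda$, completing the equality $\lambda=\kappa$ and therefore $\Menger'(A)$.

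The main obstacle will be the exchange argument in the cut-to-cover construction: formalising in $\LA$ (plus $\KMM$) that any minimum cut can be converted to one consisting only of edges incident to $x$ or $y$. The restricted hypothesis is precisely what makes this exchange local and free of the unbounded case analysis one would need for full Menger; this is why the equivalence is claimed only for $\Menger'$. The remaining pieces --- the explicit definability of $B$ as a $\Lang$-term, the verification that $\beta'$ satisfies $\select(B,\beta')$ and that $\alpha'$ covers every 1 of $B$ witnessed by paths of $\beta$, and the formalisation of the trivial direction --- are routine $\LA$-level manipulations in the spirit of Claim~\ref{clm:equivs}.
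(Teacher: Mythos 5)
Your construction points the final inequality in the wrong direction, so the hard half of the theorem is never established. You build from $\beta$ a particular selection $\beta'$ of $B$ with $\Sigma\beta'=\lambda$ and from $\gamma$ a particular cover $\alpha'$ of $B$ with $\Sigma\alpha'\le\kappa$. From one particular cover and one particular selection the only inequality available is $\Sigma\alpha'\ge\Sigma\beta'$, which gives $\kappa\ge\Sigma\alpha'\ge\Sigma\beta'=\lambda$ --- but that is just the trivial direction $\lambda\le\kappa$ again (it follows from the Claim~\ref{clm:3}-style counting argument and does not use $\KMM$ at all). To get $\kappa\le\lambda$ out of $\KMM$ you need the \emph{opposite} bounds: that \emph{every} cover of the auxiliary matrix induces an $x,y$-cut of $A$ of the same size (so $l_B\ge\kappa$) and that \emph{every} selection induces a family of pairwise internally disjoint $x,y$-paths (so $o_B\le\lambda$); only then does $l_B=o_B$ yield $\kappa\le l_B=o_B\le\lambda$. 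Neither bound is available in your setup. Restricting the 1s of $B$ to those witnessed by $\beta$ destroys the first bound, since a cover of the truncated $B$ need not cut paths outside $\beta$; and the second bound fails even for the full $B$, because two $x,y$-paths with distinct second vertices and distinct penultimate vertices can still share an internal vertex in the middle, so a selection of $B$ does not produce internally disjoint paths once paths are longer than the bipartite case.

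The paper sidesteps both problems by taking a different auxiliary matrix: $A'$ has one row per path of $\beta$ and one column per edge of $\gamma$, with $A'(i,j)=1$ iff edge $j$ lies on path $i$. Disjointness of the paths gives at most one 1 per column and the cut property gives at least one 1 per row for free; the exchange argument of Claim~\ref{clm:7} (which is where the restricted hypothesis enters, playing the role you assign to your cut-normalisation step) upgrades this to exactly one 1 in every row and column. Then $\kappa=c(A')=o_{A'}$, and $\KMM$ gives $o_{A'}=l_{A'}\le r(A')=\lambda$ because the rows themselves are a cover. If you want to salvage your neighbour-indexed matrix $B$, you would have to prove both missing bounds, and the internal-disjointness failure indicates this cannot be done without reintroducing essentially the paths-versus-cut-edges bookkeeping that the paper uses.
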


\begin{proof}
Suppose that we have $\Maxdisj(A,x,y,\lambda)$ and
$\Mincut(A,x,y,\kappa)$; these two formulas assert the existence of
$\beta$, a collection of $\lambda$ many pairwise disjoint $x,y$-paths,
and $\gamma$, an $x,y$-cut of size $\kappa$. (The constructions of
$\beta$ and $\gamma$ have been shown earlier in this section.) We
assume that $x,y$ is a restricted pair of vertices, as in
Definition~\ref{dfn:restricted}.

Each path in $\beta$ must have at least one edge in the cut $\gamma$
and no edge of $\gamma$ can be in more than one path in $\beta$, hence
$\lambda\le\kappa$. The proof of this is identical to the proof of
Claim~\ref{clm:3}. 

Thus, it remains to show, using KMM, that $\lambda\ge\kappa$.  To this
end we proceed as follows: we construct a new matrix $A'$, such that
each row of $A'$ corresponds to one of the paths in~$\beta$. Note that
since the paths in $\beta$ are disjoint, the number of rows of $A'$ is
polynomial in the size of $A$; this is a key observation --- there can
be at most linearly many (in the number of vertices) disjoint paths in
a given graph. On the other hand, the columns of $A'$ correspond to
the edges in $\gamma$, again bounded by a polynomial in the size of
$A$, as there are at most $|A|^2$ edges in the graph.  We have
$A'(i,j)=1\iff$ edge $j$ is in the path $i$.

Note that $\beta$ and $\gamma$ are built independently; the only
assertion we make about their relationship is that they are of the
same size, i.e., $\lambda=\kappa$. In the next Claim we show how we
can modify $\beta$ and $\gamma$ (using an algorithm provably correct
in $\LA$) in order to obtain an $A'$ to which we can apply KMM.

\begin{clm}\label{clm:7}
We can modify $\beta$ and $\gamma$, with a procedure provably correct
in $\LA$, so that each row and column of $A'$ contains exactly one~1.
\end{clm}

Before we prove Claim~\ref{clm:7} we show how we use it to show that
$\lambda\ge\kappa$: if each row and column of $A'$ has exactly one~1,
then $\kappa=o_{A'}$, and by KMM, $o_{A'}=l_{A'}\le
r(A')=\lambda$.

We now prove Claim~\ref{clm:7}.  First observe that no matter what
$\beta$ and $\gamma$ we pick, each column of $A'$ has at most one~1,
and each row of $A'$ has at least one~1.  The reason is that the paths
in $\beta$ are pairwise disjoint --- hence they never share an edge.
If a row of $A'$ contains no~1s, then we have an $x,y$-path, and
$\gamma$ is not an $x,y$-cut.

Suppose there is a column without a~1. Then there is an edge
$e\in\gamma$ that
does not belong to any path in $\beta$; if $e$ were unnecessary, we
could lower $\kappa$, and get a contradiction (with the minimality of
$\kappa$). Thus, $e$ cuts some $x,y$-path $\rho$ not in $\beta$; by
the maximality of $\beta$, $\rho$ shares an edge $e'$ with some path
$\rho'$ which is in $\beta$. If $e'$ is in the cut $\gamma$, then $e$
is not needed --- contradiction again. So $e'$ is not in $\gamma$;
exchange $e$ and $e'$ in order to obtain a new $\gamma$. Here is were
we use the restriction on the graphs, i.e., we know that $e$ is not
shared by any other paths, as $\rho$ shares its edge with $\rho'$. We
leave showing that each row has at most one~1 to the reader.
\end{proof}

\subsection{Hall's Theorem}

Let $S_1,S_2,\dots,S_n$ be $n$ subsets of a given set $M$. Let $D$ be
a set of $n$ elements of $M$, $D=\{a_1,a_2,\dots,a_n\}$, such that
$a_i \in S_i$ for each $i=1,2,\dots,n$. Then $D$ is said to be a {\em
system of distinct representative} (SDR) for the subsets
$S_1,S_2,\dots,S_n$.

If the subsets $S_1,S_2,\dots,S_n$ have an SDR, then any $k$ of the
sets must contain between them at least $k$ elements. The converse
proposition is the combinatorial theorem of P.\ Hall: suppose that for
any $k=1,2,\ldots,n$, any $S_{i_1}\cup S_{i_2}\cup\cdots\cup S_{i_k}$
contains at least $k$ elements of $M$; we call this the {\em union
property}.  Then there exists an SDR for these subsets.
(See~\cite{hall-1987,everett-1949,halmos-1950} for more on Hall's
theorem.)

We formalize Hall's theorem in $\Lang$ with an adjacency matrix $A$
such that the rows of $A$ represent the sets $S_i$, and the columns of
$A$ represent the indices of the elements in $M$, i.e., the columns
are labeled with $[n]=\{1,2,\ldots,n\}$, and $A(i,j)=1\iff j\in S_i$.
Let $\SDR(A)$ be the following $\Sigma_1^B$ formula which states that
$A$ has a system of distinct representatives:
\begin{equation}\label{eq:existsdr}
\SDR(A):=(\exists P\leq n)(\forall i\leq n)(AP)_{ii}=1
\end{equation}
We reserve the
letters $P,Q$ for permutation matrices, and $(\exists P\le
n)\phi$ abbreviates $(\exists P)[\Perm(P)\wedge |P|\le n\wedge\phi]$
(similarly for $(\forall P\le n)\phi$, but with an implication
instead of a conjunction),
where $\Perm$ is a $\Sigma_0^B$ predicate stating that $P$ is a
permutation matrix (a unique~1 in each row and column).
See~\cite{soltys-tcs04} for more details about handling permutation
matrices. 

The next
predicate is a $\Pi_2^B$ formula stating the union property:
\begin{equation}\label{eq:unionprop-sdr}
\begin{split}
&\Unionprop(A):=  \\
& \forall P\le n \forall k\leq n 
\exists Q\le n \\
& [\forall i\leq k (\lambda_{pq}\langle k,1,(PAQ)_{pi}\rangle 
\neq \lambda_{pq}\langle k,1,0 \rangle)]
\end{split}
\end{equation}

Therefore, we can state Hall's theorem as a $\Sigma_2^B$ formula:
\begin{equation}\label{eq:HallsTh}
\Hall(A):= \Unionprop(A) \rightarrow \SDR(A)
\end{equation}

\begin{lem}\label{lem:KMM-to-Hall}
$\LA \cup \KMM \vdash \Hall$.
\end{lem}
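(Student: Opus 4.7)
The plan is to reduce Hall's theorem to $\KMM$ via the observation that $\SDR(A)$ is equivalent, provably in $\LA$, to the existence of a selection of $A$ of size $n$: any $\beta$ with $\select(A,\beta)$ and $\Sigma\beta = n$ is necessarily a permutation matrix (each row and column contains exactly one $1$), and then $P := \beta^T$ witnesses $\SDR(A)$ because $(AP)_{ii} = A(i, k_i) = 1$ for the unique $k_i$ with $\beta(i, k_i) = 1$. Assuming $\Unionprop(A)$, the task reduces to producing such a $\beta$.

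The combinatorial heart of the argument, which parallels the pigeonhole reasoning of Claim~\ref{clm:3}, is to show in $\LA$ alone that every cover of $A$ has size at least $n$. Given any $\alpha$ with $\cover(A,\alpha)$, let $r$ be the number of row-lines and $c$ the number of column-lines of $\alpha$, so $\Sigma\alpha = r + c$. Suppose for contradiction that $r + c < n$ and set $k := n - r$, so $k > c$. From the bits of $\alpha$ I would construct, as a $\Lang$-term, an explicit permutation matrix $P$ that moves the $k$ uncovered rows of $A$ into positions $1, \ldots, k$. Instantiating $\Unionprop(A)$ with this $P$ and this $k$ yields a column permutation $Q$ such that, for every $i \le k$, column $i$ of $PAQ$ contains a $1$ in one of the first $k$ rows. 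Because these first $k$ rows of $PA$ are the uncovered rows of $A$ and $Q$ is a permutation, this supplies $k$ pairwise distinct columns $j_1, \ldots, j_k$ of $A$ each containing a $1$ in some uncovered row; since $\alpha$ covers every $1$ of $A$, each such column must itself lie in the cover, i.e., $\alpha(2, j_\ell) = 1$ for $\ell = 1, \ldots, k$. Summing the indicator row then gives $c \ge k$, contradicting $k > c$.

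With every cover of $A$ of size at least $n$ established, the trivial cover $\alpha^* := \lambda_{pq}\langle 2, n, \cond(p = 1, 1, 0)\rangle$ consisting of all $n$ row-lines is automatically a minimum cover, so $\Mincover(A, \alpha^*)$ holds. Invoking $\KMM$ on $\alpha^*$ together with a max-selection witness $\beta$ then forces $\Sigma\beta = \Sigma\alpha^* = n$, from which $\SDR(A)$ follows via $P := \beta^T$ exactly as sketched in the first paragraph.

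The main obstacle is producing the max-selection witness $\beta$, since $\exists \beta\,\Maxselect(A,\beta)$ is $\Sigma_2^B$ whereas $\LA$ provides only $\Sigma_0^B$ induction. My proposed workaround is to proceed by contradiction and exploit the prenex $\Sigma_1^B$ form of the $\KMM$ axiom, whose Skolem witnesses $\alpha', \beta'$ for any given $A, \alpha, \beta$ must either certify that $\alpha$ is not a minimum cover or that $\beta$ is not a maximum selection. Under $\neg \SDR(A)$ every selection of $A$ has size strictly less than $n$, so repeatedly instantiating this prenex $\KMM$ at $\alpha^*$ against a growing sequence of candidate selections must, since $\Sigma\alpha^* \neq \Sigma\beta$ at each stage, either eventually falsify the minimality of $\alpha^*$ (yielding a cover of size $< n$ and contradicting the preceding step) or else keep producing strictly larger selections --- a process that stops after at most $n$ steps. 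Formalizing this iteration inside $\Sigma_0^B$ induction is the key technical delicacy; all remaining bookkeeping --- the explicit $\Lang$-construction of $P$ from $\alpha$, distinctness and coverage of the $j_\ell$, and the passage from $\beta$ to $P := \beta^T$ --- is routine within $\LA$.
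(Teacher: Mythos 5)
Your main reduction is sound and is in fact the classical argument, but it is not the route the paper takes. The paper's proof goes through Claim~\ref{clm:2}: it permutes $A$ into diagonal form and argues that $\Unionprop(A)$ forces all diagonal entries of $PAQ$ to be $1$, whence $n$ lines are needed and KMM yields a selection of $n$ independent $1$s. Your version instead proves directly in $\LA$ that $\Unionprop(A)$ forces \emph{every} cover to have size at least $n$ (the uncovered-rows pigeonhole argument), so that the all-rows cover $\alpha^*$ witnesses $\Mincover(A,\alpha^*)$ with $\Sigma\alpha^*=n$, and then KMM forces any maximum selection to be a full permutation matrix. Your lower-bound argument instantiates the paper's $\Pi_2^B$ formalization of $\Unionprop$ correctly (the term-definable $P$ sorting rows by coveredness, the $k$ distinct columns supplied by $Q$, and the counting $c\ge k$ are all of the same flavor as Claim~\ref{clm:3} and live in $\LA$). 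This is arguably cleaner and more self-contained than the paper's appeal to the diagonal form, which as written would already hand you the SDR without ever invoking KMM.

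The genuine gap is your last paragraph. Since $\KMM$ is only the implication $\Mincover(A,\alpha)\wedge\Maxselect(A,\beta)\ra\Sigma\alpha=\Sigma\beta$, you must \emph{exhibit} some $\beta$ with $\Maxselect(A,\beta)$ before KMM tells you anything, and your proposed fix --- iterating the contrapositive to grow a selection from $\Sigma\beta=0$ up to $n$ --- is an induction on the statement ``there exists a selection of size at least $i$,'' which is $\Sigma_1^B$ and hence available in $\ELA$ but not in $\LA$; nor can the witness be produced by an $\Lang$-term, since that would amount to a term-definable maximum matching. So the workaround fails in the theory you are allowed to use. You should be aware that the paper's own proof commits the same sin (it simply asserts ``by $\KMM$ there exists a selection of $n$ $1$s''), and the paper elsewhere concedes that KMM as an implication asserts no existence; so you have correctly located the real difficulty, but neither your iteration nor the paper's one-liner discharges it inside $\LA\cup\KMM$ as literally stated.
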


\begin{proof}
Let $A$ be a 0-1 sets/elements incidence matrix of size $n \times n$.
Assume that we have $\Unionprop(A)$; our goal is to show in $\LA$,
using~KMM, that $\SDR(A)$ holds. 

Since by Claim~\ref{clm:2}, every matrix can be put in a diagonal
form, using the fact that we have $\Unionprop(A)$, it follows that we
can find $P,Q\le n$ such that $\forall k\leq n (PAQ)_{kk}=1$.  Thus we
need $n$ lines to cover all the~1s, but by $\KMM$ there exists a
selection of $n$ 1s no two on the same line, hence, $A$ is of term
rank $n$.  

But this means that the maximal selection of 1s, no two on the same
line, constitutes a permutation matrix $P$ (since $A$ is $n\times n$,
and we have $n$ 1s, no two on the same line). Note that $AP^T$ has all
ones on the diagonal, and this in turn implies $\SDR(A)$.
\end{proof}

\begin{lem}\label{lem:Hall-to-KMM}
$\LA \cup \Hall \vdash \KMM$.
\end{lem}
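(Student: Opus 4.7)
The plan is to prove the nontrivial direction $\Sigma\alpha \le \Sigma\beta$; combined with Claim~\ref{clm:3}, which gives $\Sigma\beta \le \Sigma\alpha$ in $\LA$, this yields $\Sigma\alpha = \Sigma\beta$ whenever $\Mincover(A,\alpha)$ and $\Maxselect(A,\beta)$. Concretely, I will use two applications of $\Hall$ to construct a selection $\beta^\ast$ with $\Sigma\beta^\ast = \Sigma\alpha$, whence $\Sigma\beta \ge \Sigma\beta^\ast = \Sigma\alpha$ by $\Maxselect(A,\beta)$. Let $R = \{i : \alpha(1,i)=1\}$ and $C = \{j : \alpha(2,j)=1\}$ be the rows and columns selected by the cover, with $r_1 = |R|$, $r_2 = |C|$, and $\Sigma\alpha = r_1+r_2 \le n$.

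The first application of $\Hall$ is to the set system $\{S_i\}_{i \in R}$ with $S_i = \{j \in \bar C : A(i,j)=1\}$, encoded as the $r_1 \times (n-r_2)$ submatrix of $A$ indexed by $R \times \bar C$ and padded with rows of $1$s to form a square matrix $\tilde B$. The key sub-claim is that $\Unionprop(\tilde B)$ is derivable in $\LA$ from $\Mincover(A,\alpha)$ by contradiction: if some $R' \subseteq R$ had $T := \bigcup_{i\in R'} S_i$ of size $|T| < |R'|$, then the $\Lang$-term $\alpha'$ that removes the rows of $R'$ from $\alpha$ and adds the columns of $T$ to $\alpha$ would satisfy $\cover(A,\alpha')$ -- since every 1 in a row $i \in R'$ lies in a column of $C \cup S_i \subseteq C \cup T$ -- while $\Sigma\alpha' = r_1 - |R'| + r_2 + |T| < \Sigma\alpha$, contradicting $\Mincover(A,\alpha)$; the padded all-1s rows trivially preserve the union property. $\Hall$ then yields, for each $i\in R$, a distinct column $a_i \in \bar C$ with $A(i,a_i) = 1$. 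A symmetric application of $\Hall$ to the ``transposed'' sub-system (rows $\bar R$, columns $C$, with $T_j = \{i \in \bar R : A(i,j)=1\}$) yields, for each $j\in C$, a distinct row $b_j \in \bar R$ with $A(b_j,j)=1$.

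Define $\beta^\ast$ to be the 0-1 matrix with 1s precisely at the positions $\{(i,a_i) : i \in R\} \cup \{(b_j,j) : j \in C\}$. The row indices $R \cup \{b_j : j \in C\} \subseteq R \cup \bar R$ are all distinct, and the column indices $\{a_i : i \in R\} \cup C \subseteq \bar C \cup C$ are all distinct; hence $\select(A,\beta^\ast)$ holds with $\Sigma\beta^\ast = r_1 + r_2 = \Sigma\alpha$, as required. The main obstacle is the union-property sub-claim above: one must package the failure of Hall's condition into the explicit smaller cover $\alpha'$ -- definable as a $\Lang$-term -- and derive the contradiction using only $\Sigma_0^B$-induction. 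The square padding, the extraction of the $a_i, b_j$ from the formal $\SDR$ witness in~(\ref{eq:existsdr}), and the verification of $\select$ and $\Sigma\beta^\ast = \Sigma\alpha$ are all routine $\Lang$-term manipulations.
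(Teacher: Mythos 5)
Your proof is correct and follows essentially the same route as the paper's: both reduce the problem to showing $o_A\ge l_A$ by applying Hall's theorem twice --- once to the (cover-rows)$\times$(uncovered-columns) block and once to the transpose of the (uncovered-rows)$\times$(cover-columns) block --- and then combining the two resulting SDRs into a selection of size $\Sigma\alpha$. Your explicit verification of the union property via the smaller-cover contradiction, and the padding needed to fit the non-square blocks into the paper's square formalization of $\Hall$, merely fill in details the paper leaves to the reader.
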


\begin{proof}
Suppose that we have $\Mincover(A,\alpha)$ and $\Maxselect(A,\beta)$;
we want to conclude that $\Sigma\alpha=\Sigma\beta$ using Hall's
Theorem.

As usual, let $l_A=\Sigma\alpha$ and $o_A=\Sigma\beta$, and
by Claim~\ref{clm:3} we already have that
$\LA\vdash o_A\le l_A$ (see Section~\ref{sec:II}).  
We now show in $\LA$ that $o_A\ge l_A$ using Hall's Theorem.  

Suppose
that the minimum number of lines that cover all the 1s of $A$ consists
of~$e$ rows and~$f$ columns, so that $l_A=e+f$. Both $l_A$ and $o_A$
are invariant under permutations of the rows and the columns of $A$
(Lemma~\ref{lem:1}), and so we reorder the rows and columns of $A$ so
that these~$e$ rows and~$f$ columns are the initial rows and columns
of $A'$,
$$
A'=\left[\begin{array}{cc}
A_1 & A_2 \\
A_3 & A_4 
\end{array}\right],
$$
where $A_1$ is of size $e \times f$.  Now, we shall work with the term
rank of $A_2$ and $A_3$ in order to show that $o_A\ge l_A$.  More
precisely, we will show that the maximum number of 1s, no two on the
same line, in $A_2$ is $e$, while in $A_3$ it is $f$.

Let us consider $A_2$ as an incidence matrix for subsets
$S_1,S_2,\ldots, S_e$ of a universe of size $|A|-f$, and $A_3^t$
(which is the transpose of $A_3$) as an incidence matrix for subsets
$S'_1,S'_2,\ldots,S'_f$ of a universe of size $|A|-e$.
It is not difficult to prove that $\Unionprop(A_2)$ and
$\Unionprop(A^t_3)$ holds (and can be proven in $\LA$; this is left to
the reader), which in turn
implies $\SDR(A_2)$ and $\SDR(A^t_3)$, resp., by Hall's Theorem.  But
the system of distinct representative of $A_2$ (resp.\ $A^t_3$) implies that
$o_{A_2}\ge e$ (resp.\ $o_{A^t_3}=o_{A_3}\ge f$), and since $o_A\ge
o_{A_2}+o_{A_3}$, this yields that
$o_A\ge e+f=l_A$.
\end{proof}

\subsection{Dilworth's Theorem}

Let $\mathscr{P}$ be a {\em finite partially ordered set} or {\em
poset} (we use a ``script $\mathscr{P}$'' in order to distinguish it
from permutation matrices, denoted with $P$).  We say that $a,b \in
\mathscr{P}$ are {\em comparable elements} if either $a<b$ or $b<a$. A
subset $C$ of $\mathscr{P}$ is a {\em chain} if any two distinct
elements of $C$ are comparable. A subset $S$ of $\mathscr{P}$ is an
{\em anti-chain} (also called an {\em independent set}) if no two
elements of $S$ are comparable.

We want to partition a poset into chains; a poset with an anti-chain of
size $k$ cannot be partitioned into fewer than $k$ chains, because any
two elements of the anti-chain must be in a different partition.
Dilworth's Theorem states that the maximum size of an anti-chain equals
the minimum number of chains needed to partition $\mathscr{P}$.  (For
more on Dilworth's Theorem see~\cite{dilworth-1950,perles-1963}). 

In order to formalize Dilworth's theorem in $\Lang$, we represent
finite posets $\mathscr{P}=(X=\{x_1,x_2,\ldots,x_n\},<)$ with an
incidence matrix $A=A_\mathscr{P}$ of size $|X|\times|X|$,
which expresses the relation $<$ as follows: $A(i,j)=1\iff x_i<x_j$.
For more material regarding formalizing posets
see~\cite{soltys-szpilrajn}.

We let a $1\times n$ matrix $\alpha$ encode a chain as follows:
\begin{equation}\label{eq:chain}
\begin{split}
\Chain&(A,\alpha):= (\forall i\neq j\le n)\\
& [\alpha(i)=\alpha(j)=1\ra A(i,j)=1\vee A(j,i)=1].
\end{split}
\end{equation}
In a similar fashion to~(\ref{eq:chain}) 
we define an anti-chain $\gamma$; the only difference is that the
succedent of the implication expresses the opposite:
$A(i,j)=0\wedge A(j,i)=0$.

Recall that using~(\ref{eq:collection}) we were able to talk about a
collection of paths; in a similar vain, we can use $\Lang$ to talk
about a collection of chains of $\mathscr{P}$: $\beta$ is an $1\times
\kappa\cdot n$ matrix which encodes the contents of $\kappa$ many
chains.  We can then talk about a minimal collection of chains, or a
maximal size of an anti-chain in the usual fashion.  Since we have
done this already for collections of paths, we omit the details in the
interest of space. The reader is encouraged to fill in the details.

We can state Dilworth's Theorem as follows:
\begin{equation}\label{eq:DilTh}
\begin{split}
& \Dilworth(A):= (\exists\beta\le|A|^2)(\exists\gamma\le|A|)\\
& \qquad\Minchain(A,\beta,\kappa)\wedge\Maxantichain(A,\gamma,\lambda)
\rightarrow \lambda = \kappa
\end{split}
\end{equation}
where the predicate 
$\Minchain(A,\beta,\kappa)$ asserts that $\beta$ is a collection
of $\kappa$ many chains that partition the poset, and
the predicate
$\Maxantichain(A,\gamma,\lambda)$ asserts that $\gamma$ is an
anti-chain consisting of $\lambda$ elements. Again, the details of the
$\Lang$ definitions can be provided by the reader, in light of the
definitions given in Section~\ref{sec:menger}.

\begin{lem}\label{lem:KMM-to-Dilworth}
$\LA \cup \KMM \vdash \Dilworth$
\end{lem}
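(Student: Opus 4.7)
The plan is to deduce Dilworth's theorem by applying $\KMM$ to the incidence matrix $A$ of the poset (where $A(i,j)=1\iff x_i<x_j$), reinterpreted as the bi-adjacency matrix of a bipartite graph whose two sides are copies of $[n]$. Apply $\KMM$ to $A$ to obtain a minimum cover $\alpha$ and a maximum selection $\beta$ with $\Sigma\alpha=\Sigma\beta=:m$. From these two objects I will manufacture an antichain $\gamma'$ of size $\ge n-m$ and a chain decomposition $\beta'$ of size $\le n-m$, and then sandwich together with the trivial inequality ``max antichain $\le$ min chains''.

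First, I would build the antichain. Set $\gamma'(1,i):=(1-\alpha(1,i))\cdot(1-\alpha(2,i))$, i.e.\ $i\in\gamma'$ iff neither row $i$ nor column $i$ belongs to the cover. Since $\Sigma\alpha=m$, the set of indices touched by $\alpha$ has size at most $m$, so $\Sigma\gamma'\ge n-m$. For antichain-ness, suppose in $\LA$ that $\gamma'(1,i)=\gamma'(1,j)=1$ and $x_i<x_j$; then $A(i,j)=1$, and $\cover(A,\alpha)$ forces $\alpha(1,i)=1$ or $\alpha(2,j)=1$, contradicting the definition of $\gamma'$. The symmetric case $x_j<x_i$ is identical. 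Thus $\gamma'$ is an antichain, giving $\lambda\ge n-m$.

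Next I would build the chain decomposition. Interpret $\beta$ as a directed graph on $[n]$ with $i\to j$ iff $\beta(i,j)=1$. Because $\beta$ is a selection each vertex has in- and out-degree at most $1$, and because $\beta(i,j)=1\Rightarrow x_i<x_j$, the asymmetry of $<$ prevents directed cycles. Hence the graph is a disjoint union of directed paths, and transitivity of $<$ shows that the vertex set of each maximal path is a chain in $\mathscr{P}$. To exhibit this as a $\beta'$ satisfying $\Minchain$, I would compute the reflexive–transitive closure $\beta^{*}$ via matrix powers (which $\LA$ can do, since $\LA$ defines matrix multiplication and the closure is bounded by $|A|$ iterations), identify the ``sources'' of $\beta$ (indices $i$ with $\Sigma\lambda pq\langle n,1,\beta(p,i)\rangle=0$), and let the $k$-th block of $\beta'$ list the column $\beta^{*}(i_k,\cdot)$ where $i_k$ is the $k$-th source. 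The number of sources equals $n-\Sigma\beta=n-m$, so $\kappa\le n-m$.

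Finally I would conclude: any chain meets any antichain in at most one element, so $\LA$ proves $\lambda\le\kappa$ directly (a routine summation argument analogous to Claim~\ref{clm:3}). Combining gives $n-m\le\lambda\le\kappa\le n-m$, hence $\lambda=\kappa$. The main obstacle is Step~2: formalizing the passage from the selection $\beta$ to a genuine partition-into-chains object $\beta'$ of the shape demanded by $\Minchain$. This is purely a $\Sigma_0^B$/matrix-algebra construction (sources, transitive closure, and verifying by transitivity of $<$), requiring no matrix-quantifier induction, so it stays within $\LA$ as required; but it is the step where one has to be careful to package the paths in $\beta$ into the $1\times\kappa n$ matrix format specified earlier, and to verify in $\LA$ that the resulting blocks are pairwise disjoint and cover $[n]$.
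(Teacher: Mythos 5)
Your argument is the classical derivation of Dilworth from K{\"o}nig's theorem via the ``split'' bipartite representation of the poset, and it is a genuinely different route from the paper's. The paper applies $\KMM$ to an auxiliary chains-versus-elements incidence matrix $A'$ built from the \emph{given} witnesses of $\Minchain$ and $\Maxantichain$, and reads off $\lambda=o_{A'}=l_{A'}=r(A')=\kappa$; you instead apply $\KMM$ to the order matrix $A$ itself and manufacture, from a cover and a selection of common weight $m$, an antichain of size at least $n-m$ and a chain partition of size at most $n-m$, then sandwich with the easy inequality $\lambda\le\kappa$. Your route makes visible where the hard direction of Dilworth actually lives (the paper's asserted correspondence between maximal selections of $A'$ and maximal antichains glosses over the fact that picking one element per chain need not give an antichain), and your verifications that $\gamma'$ is an antichain and that each $\beta$-path is a chain are correct and quantifier-free.

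The gap is in the claim that everything ``stays within $\LA$.'' Two steps exceed what $\LA\cup\KMM$ visibly provides. First, $\KMM$ as formalized in~(\ref{eq:KMM}) is only an implication: it does not assert the existence of a minimum cover $\alpha$ and maximum selection $\beta$ of $A$, and the paper obtains existence only in $\ELA$ (via the induced algorithm); your proof needs both objects as starting data, whereas the paper's proof applies $\KMM$ to objects already supplied by the antecedent of $\Dilworth$. Second, and more seriously, packaging the selection $\beta$ into a $\Minchain$ witness requires the reflexive--transitive closure $\beta^{*}$, and $\beta^{*}$ is not an $\Lang$ term: iterated matrix powering is not available as a term operation, and the natural construction (``for every $k\le n$ there exists a matrix encoding the paths of length at most $k$'') is a $\Sigma_1^B$ induction, i.e.\ it lives in $\ELA$, not $\LA$. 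Unless you can exhibit a single $\lambda$-term computing, for each element, the source of its $\beta$-path, your argument establishes $\ELA\cup\KMM\vdash\Dilworth$ but not the low-complexity equivalence asserted by the lemma and needed for Theorem~\ref{thm:equivalences}.
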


\begin{proof}
Suppose that we have $\Minchain(A,\beta,\kappa)$ and
$\Maxantichain(A,\gamma,\lambda)$;
we want to use $\LA$ reasoning and KMM in order
to show that $\lambda=\kappa$.

As usual we define a matrix $A'$ whose rows are labeled by the chains
in $\beta$, and whose columns are labeled by the elements of the
poset. As there cannot be more chains than elements in the poset, it
follows that the number of rows of $A'$ is bounded by $|A|$ (while the
number of columns is exactly~$|A|$). The proof of this is similar to
the proof of Claim~\ref{clm:3}.

We have that $A'(i,j)=1\iff$ chain $i$ contains element~$j$.
Clearly each column contains at least one~1, as $\beta$ is a partition
of the poset. On the other hand, rows may contain more than one~1, as
in general chains may have more than one element.

Note that a maximal selection of 1s, no two of them on the same line,
corresponds naturally to a maximal anti-chain; such a selection picks
one~1 from each line, and so its size is the number of rows of $A'$.
By KMM, it follows that 
$$
\lambda=o_{A'}=l_{A'}=r(A')=\kappa,
$$
where $r(A')$ is the number of rows of $A'$.
\end{proof}

\begin{lem}\label{lem:Dilworth-to-KMM}
$\LA \cup \Dilworth \vdash \KMM$
\end{lem}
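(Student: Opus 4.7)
The plan is to reduce $\KMM$ to Dilworth's Theorem by exhibiting, for each $0$-$1$ matrix $A$ of size $n\times n$, a companion poset $\mathscr{P}_A$ on $2n$ elements whose chain partitions and antichains translate directly into selections and (complements of) covers of $A$. Let $V_1 = \{r_1,\ldots,r_n\}$ index the rows of $A$ and $V_2 = \{c_1,\ldots,c_n\}$ index the columns, and set $r_i < c_j$ iff $A(i,j)=1$, with no other strict comparabilities. Antisymmetry and transitivity are immediate since every ``$<$'' crosses $V_1$ to $V_2$. The incidence matrix $A_\mathscr{P}$ is of size $2n\times 2n$, expressible as a single $\Lang$ term with $A_\mathscr{P}(i,n+j) = A(i,j)$ for $i,j\le n$ and $0$ elsewhere.

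The key observation, provable in $\LA$ by a direct $\Sigma_0^B$ check, is that every chain of $\mathscr{P}_A$ has size at most two: it is either a singleton vertex or a pair $\{r_i,c_j\}$ with $A(i,j)=1$, i.e., a $1$-entry of $A$. From this, two correspondences follow and are verifiable in $\LA$. First, antichains of $\mathscr{P}_A$ are precisely complements in $V_1\cup V_2$ of covers of $A$, so their sizes satisfy $|\gamma| + \Sigma\alpha = 2n$. Second, a chain partition of $\mathscr{P}_A$ consisting of $k$ size-two chains plus $2n-2k$ singletons corresponds to a selection of $A$ of size $k$ (the $1$-entries giving the size-two chains), with total chain count $2n - k$.

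Given $\Mincover(A,\alpha)$ with $\Sigma\alpha = l_A$ and $\Maxselect(A,\beta)$ with $\Sigma\beta = o_A$, I would construct $\gamma_\mathscr{P}$ as the complement of $\alpha$ (viewed as a subset of $V_1\cup V_2$), and $\beta_\mathscr{P}$ as the chain partition whose size-two chains are exactly the $1$-entries of $\beta$. The extremality conditions transfer by contradiction in $\LA$: a strictly larger antichain would yield a strictly smaller cover of $A$ (contradicting $\Mincover(A,\alpha)$), and a strictly smaller chain partition would yield a strictly larger selection (contradicting $\Maxselect(A,\beta)$). Hence both $\Maxantichain(A_\mathscr{P},\gamma_\mathscr{P},2n-l_A)$ and $\Minchain(A_\mathscr{P},\beta_\mathscr{P},2n-o_A)$ hold, and instantiating $\Dilworth(A_\mathscr{P})$ yields $2n - l_A = 2n - o_A$, i.e., $l_A = o_A$, which is $\KMM$.

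The main obstacle will be the formalization of the two correspondences in $\LA$ without any matrix-quantifier induction. Each correspondence is realized by an explicit $\Lang$ term (bit-complement of a $0$-$1$ row vector, or extraction of the size-two chains from a partition), so the required $\Sigma$-equalities reduce to routine $\Sigma_0^B$ reasoning analogous to that used in Lemma~\ref{lem:1} and Claim~\ref{clm:3}, including the $\LA$ fact that $\Sigma$ is additive over disjoint index blocks. A minor subtlety is verifying that every chain of $\mathscr{P}_A$ really has size at most two, but this is immediate from the block structure of $A_\mathscr{P}$, whose only nonzero entries lie in the $V_1\times V_2$ block. As in Claim~\ref{clm:3}, one direction of the inequality (here $o_A\le l_A$) is already available cheaply, but packaging everything through a single application of $\Dilworth(A_\mathscr{P})$ gives both directions at once.
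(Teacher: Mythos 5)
Your proof is correct, but it takes a genuinely different route from the paper. The paper does not reduce $\KMM$ to $\Dilworth$ directly: it shows $\LA\cup\Dilworth\vdash\Hall$ and then invokes the already-established Lemma~\ref{lem:Hall-to-KMM} ($\LA\cup\Hall\vdash\KMM$). Its poset is essentially the same height-two bipartite order you build ($x_i<S_j$ iff $A(i,j)=1$), but Dilworth is applied only in the degenerate case where the union property forces the maximum antichain to have size exactly $n$, so that the resulting $n$-chain partition of $2n$ elements must consist of $n$ two-element chains, i.e.\ a perfect matching giving the SDR. You instead use the full strength of Dilworth on an arbitrary bipartite poset, via the two complementation identities that the paper never makes explicit: antichains are exactly complements of covers (so $\lambda = 2n - l_A$), and chain partitions with $k$ two-element chains have $2n-k$ chains (so $\kappa = 2n - o_A$); equating them gives $l_A=o_A$ in one shot. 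What each approach buys: the paper's detour is shorter on the page because it reuses Lemma~\ref{lem:Hall-to-KMM}, but your argument is self-contained, does not inherit that lemma's dependence on Hall applied to the submatrices $A_2,A_3^t$, and isolates the cover/antichain duality that is the actual content of the Dilworth--K{\"o}nig equivalence. The $\LA$-formalizability claims you make (the size-at-most-two chain observation, the complement term, and the extremality transfer by contradiction without induction) are all at the level of rigor the paper itself accepts, so there is no gap.
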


\begin{proof}
It is in fact easier to show that that $\LA\cup\Dilworth\vdash\Hall$,
and since by Lemma~\ref{lem:Hall-to-KMM} we have that
$\LA\cup\Hall\vdash\KMM$, we will be done.

In order to prove Hall using Dilworth and $\LA$ reasoning, we assume
that we have $A$, a 0-1 sets/elements incidence matrix of size $n
\times n$.  Assume that we have $\Unionprop(A)$; our goal is to show
in $\LA$, using Dilworth, that $\SDR(A)$ holds. 

Let $S_1,S_2,\ldots,S_n$ be subsets of $\{x_1,x_2,\ldots,x_n\}$ where
$n=|A|$. 
We define a partial order $\mathscr{P}$ based on $A$; the universe of
$\mathscr{P}$ is $X=\{S_1,S_2,\ldots,S_n\}\cup\{x_1,\ldots,x_n\}$.
The relation $<_\mathscr{P}$ is defined as follows:
$x_i<_\mathscr{P}S_j\iff A(i,j)=1$.

\begin{clm}
The maximum size of an anti-chain in $\mathscr{P}$ is $n$.
\end{clm}

\begin{proof}
The $\{x_1,\ldots,x_n\}$ form an anti-chain
of length $n$, and we cannot add any of the $S_j$, as some $x_i\in
S_j$, and hence $x_i<_\mathscr{P}S_j$. 
\end{proof}

By Dilworth we can partition $\mathscr{P}$ into $n$ chains, where each
of the chains has two elements $\{x_i,S_j\}$, giving us the set of
distinct representatives, and hence $\SDR(A)$.
\end{proof}

\section{Future work}\label{sec:future}

The main open question is the following: is KMM equivalent (in $\LA$)
to the {\em general} version of Menger's Theorem? That is, can we lift the
restriction that $x,y$ is a restricted pair of vertices
(Definition~\ref{dfn:restricted}).
There are many proofs of the general version of Menger's theorem ---
for example~\cite{Goring-2000} is clearly formalizable in $\ELA$. But
it would be very interesting to know whether the general version of 
Menger's Theorem is
equivalent to KMM in low complexity.

Now that we know that $\ELA\vdash\text{KMM}$ (Theorem~\ref{thm:main})
and that KMM is equivalent to a host of other combinatorial theorems
--- and this equivalence can be shown in the weak theory $\LA$
(Theorem~\ref{thm:equivalences}) --- it would be interesting to know
whether it is also the case that: 
\begin{itemize}
\item $\LA\cup\KMM\vdash\text{PHP}$
\item $\LA\cup\text{PHP}\vdash\KMM$
\end{itemize}
that is, whether $\LA$ can prove the equivalence of KMM and the
pigeonhole principle. We conjecture that the first assertion is true,
and that it should not be too difficult to show it. The second
assertion is probably not true. Note that in the proof of
Claim~\ref{clm:3} we implicitly show a certain weaker kind of the PHP
in $\LA$:
we showed that if we have a set of $n$ items $\{i_1,i_2,\ldots,i_n\}$
and a second set of $m$ items $\{j_1,j_2,\ldots,j_m\}$, and we can
match each $i_p$ with some $j_q$, and this matching is both
definable in $\LA$ and its injectivity is provable in $\LA$,
then $n\le m$. We did this by defining an incidence matrix
$A$ such that $A(p,q)=1\iff i_p\mapsto j_q$. If this mapping is
injective, then each column of $A$ has at most one~1; thus:
$$
n\le\Sigma A=\Sigma_i(\text{col $i$ of $A$})\le\Sigma_i1\le m.
$$

Also, we would like to know whether $\LA\cup\KMM$ can prove hard
matrix identities, such as $AB=I\ra BA=I$. Of course, we already know
from~\cite{Tzameret-2011} that (non-uniform) $\textbf{NC}^2$ Frege is
sufficient to prove $AB=I\ra BA=I$. On the other hand, is it possible
that $\LA$ together with $AB=I\ra BA=I$ can prove KMM? This would
imply that $AB=I\ra BA=I$ is ``complete'' for
combinatorial matrix algebra, in the sense that all of combinatorial
matrix algebra follows from this principle with proofs of low
complexity.

Furthermore, given two 0-1 matrices $A,B$, what can we say about
$l_{AB}$ and $o_{AB}$?  From Claim~\ref{lem:1} we know that if $B$ is
a permutation matrix, then $l_{AB}=l_A$ and $o_{AB}=o_A$ (and
similarly, if $A$ is a permutation matrix); but what can be said in
general? Of course, the understanding here is that multiplication is
over the field $\{0,1\}$.

\section{Appendix --- $\LA$}\label{sec:appendix}

The logical theory $\LA$ is strong enough to prove all the ring properties
of matrices such as 
$A(BC)=(AB)C$ and 
$A+B=B+A$,
but weak enough so that the theorems of $\LA$ translate into
propositional tautologies with short Frege proofs.
$\LA$ has three sorts of object: {\em indices} (i.e., natural
numbers), {\em ring elements}, and {\em matrices}, where the
corresponding variables are denoted $i,j,k,\ldots$;  $a,b,c,\ldots$; and
$A,B,C,\ldots$, respectively.   The semantic assumes that objects of
type ring are from a fixed but arbitrary ring (for the purpose of this
paper we are only interested in the ring $\mathbb{Z}$), and
objects of type matrix have entries from that ring.

Terms and formulas are built from the following function and predicate
symbols, which together comprise the language $\mathcal{L}_\LA$:
\begin{equation}\label{symbols}
\begin{split}
&0_{\text{index}},
1_{\text{index}},
+_{\text{index}},
*_{\text{index}},
-_{\text{index}},
\quot,
\rem,  \\
&0_{\text{ring}},
1_{\text{ring}}, 
+_{\text{ring}},
*_{\text{ring}},
-_{\text{ring}}, {}^{-1},
\row, \col, \entry, \Sigma,\\
&\leq_{\text{index}},
=_{\text{index}},
=_{\text{ring}}, 
=_{\text{matrix}},
\cond_{\text{index}}, 
\cond_{\text{ring}}
\end{split}
\end{equation}
The intended meaning should be clear, except in the case of
$-_{\text{index}}$,
cut-off subtraction, defined as $i-j=0$ if $i<j$.
For a matrix $A$: $\row(A), \col(A)$ are the numbers of rows
and columns in $A$, $\entry(A,i,j)$ is the ring element $A_{ij}$
(where $A_{ij}=0$ if $i=0$ or $j=0$ or $i>\row(A)$ or $j>\col(A)$), 
$\Sigma(A)$ is the sum of the elements in $A$.  Also $\cond(\alpha,
t_1,t_2)$ is interpreted {\bf if} $\alpha$ {\bf then} $t_1$ {\bf else}
$t_2$, where $\alpha$ is a formula all of whose atomic sub-formulas
have the form $m\leq n$ or $m=n$, where $m,n$ are terms of type index,
and $t_1,t_2$ are terms either both of type index or both of type
ring.
The subscripts $_{\text{index}}$, $_{\text{ring}}$, and
$_{\text{matrix}}$  are
usually omitted, since they ought to be clear from the context.

We use $n,m$ for terms of type index, $t,u$ for terms of type ring,
and $T,U$ for terms of type matrix.  Terms of all three types are
constructed from variables and the symbols above in the usual way,
except that terms of type matrix are either variables $A,B,C,...$
or $\lambda$-terms $\lambda ij\langle m,n,t\rangle$.
Here $i$ and $j$ are variables of type index bound by the $\lambda$
operator, intended to range over the rows and columns of the matrix.
Also $m,n$
are terms of type index {\em not} containing $i,j$ (representing the
numbers of rows and columns of the matrix) and $t$ is a term of type
ring (representing the matrix element in position $(i,j)$).

Atomic formulas are of the form $m\leq n, m=n, t=u$ and $T=U$, where
the three occurrences of = formally have subscripts $_{\text{index}},
_{\text{ring}}, _{\text{matrix}}$, respectively.  General formulas are
built from atomic formulas using the propositional connectives
$\neg,\vee,\wedge$ and quantifiers $\forall,\exists$.

\subsection{Axioms and rules of $\LA$}

For each axiom listed below, every
legal substitution of terms for free variables is an axiom of $\LA$.
Note that in a $\lambda$ term $\lambda ij\langle m,n,t\rangle$
the variables $i,j$ are bound.  Substitution instances must
respect the usual rules which prevent free variables from
being caught by the binding operator $\lambda ij$.  The bound
variables $i,j$ may be renamed to any new distinct pair of variables.

\subsubsection{Equality Axioms}

These are the usual equality axioms, generalized to apply to the
three-sorted theory $\LA$.
Here = can be any of the three equality symbols, $x,y,z$ are variables
of any of the three sorts (as long as the formulas are syntactically
correct).  In A4, the symbol $f$ can be any of the non-constant function
symbols of $\LA$.  However A5 applies only to $\leq$, since this in the
only predicate symbol of $\LA$ other than =.

\begin{tabbing}
{\bf A1} \hskip 3mm \= $x=x$\\
{\bf A2} \> $x=y\ra y=x$\\
{\bf A3} \> $(x=y\wedge y=z)\ra x=z$\\
{\bf A4} \> $x_1=y_1,...,x_n=y_n\ra fx_1...x_n=fy_1...y_n$\\
{\bf A5} \> $i_1=j_1,i_2=j_2,i_1\leq i_2\ra j_1\leq j_2$
\end{tabbing}

\subsubsection{Axioms for indices}

These are the axioms that govern the behavior of index elements.  The
index elements are used to access the entries of matrices, and so we
need to define some basic number theoretic operations.

\begin{tabbing}
{\bf A6} \hskip 3mm  \= $i+1\not= 0$\\
{\bf A7}   \> $i*(j+1)=(i*j)+i$\\
{\bf A8}   \> $i+1=j+1\ra i=j$\\
{\bf A9}   \> $i\leq i+j$\\
{\bf A10}  \> $i+0=i$\\
{\bf A11}  \> $i\leq j\wedge j\leq i$\\
{\bf A12}  \> $i+(j+1)=(i+j)+1$\\
{\bf A13}  \> $[i\leq j\wedge j\leq i]\ra i=j$\\
{\bf A14}  \> $i*0=0$\\
{\bf A15}  \> $[i\leq j\wedge i+k=j]\ra j-i=k$ \\ 
{\bf A16}  \> $\neg(i\leq j)\ra j-i=0$ \\
{\bf A17}  \> $[\alpha\ra\text{cond}(\alpha,i,j)=i]\wedge
               [\neg\alpha\ra\cond(\alpha,i,j)=j]$
\end{tabbing}

\subsubsection{Axioms for a ring}

These are the axioms that govern the behavior for ring elements;
addition and multiplication, as well as additive inverses.  We do not
need multiplicative inverses.

\begin{tabbing}
{\bf A18} \hskip 3mm \= $0\not= 1 \wedge a+0=a$\\
{\bf A19} \> $a+(-a)=0$\\
{\bf A20} \> $1*a=a$\\
{\bf A21} \> $a+b=b+a$\\
{\bf A22} \> $a*b=b*a$\\
{\bf A23} \> $a+(b+c)=(a+b)+c$\\
{\bf A24} \> $a*(b*c)=(a*b)*c$\\
{\bf A25} \> $a*(b+c)=a*b+a*c$\\
{\bf A26} \> $[\alpha\ra\cond(\alpha,a,b)=a]
              \wedge[\neg\alpha\ra\cond(\alpha,a,b)=b]$
\end{tabbing}

\subsubsection{Axioms for matrices}

Axiom A27 states that
$\entry(A,i,j)$ is zero when $i,j$ are outside the size of $A$.  Axiom
A28 defines the behavior of constructed matrices.
Axioms A29-A32 define the function
$\Sigma$ recursively by first defining it for row vectors,
then column vectors ($A^t:=\lambda
ij\langle\col(A),\row(A),A_{ji}\rangle$),
and then in general using the decomposition (\ref{eq:RSM}).
Finally, axiom A33 takes care of empty matrices.

\begin{tabbing}
{\bf A27} \= $(i=0\vee\row(A)<i\vee j=0\vee\col(A)<j) \ra$ \\
             \> $\ra\entry(A,i,j)=0$\\
{\bf A28} \> $\row(\lambda ij\langle m,n,t\rangle)=m
              \wedge\col(\lambda ij\langle m,n,t\rangle)=n
              \wedge$ \\ \>$[1\leq i\wedge i\leq m\wedge 1\leq j\wedge j\leq n] \ra$ \\ \>$\ra\entry(\lambda ij\langle m,n,t\rangle,i,j)=t$ \\
{\bf A29} \> $\row(A)=1,\col(A)=1\ra\Sigma(A)=\entry(A,1,1)$\\
{\bf A30} \> $\row(A)=1\wedge 1<\col(A)
             \ra\Sigma(A)=$ \\ \>$=
           \Sigma(\lambda ij\langle 1,\col(A)-1,A_{ij}\rangle)+A_{1\col(A)}$\\
{\bf A31} \> $\col(A)=1\ra\Sigma(A)=\Sigma(A^t)$\\
{\bf A32} \> $1<\row(A)\wedge 1<\col(A)
          \ra\Sigma(A)=$\= \\ \>$=\entry(A,1,1)+
          \Sigma(\R(A))+\Sigma(\SA(A))+\Sigma(\M(A))$ \\
{\bf A33} \> $\row(A)=0\vee\col(A)=0\ra\Sigma A=0$
\end{tabbing}

Where
\begin{equation}\label{eq:RSM}
\begin{split}
\R(A)&:=\lambda ij\langle 1,\col(A)-1,\entry(A,1,i+1)\rangle, \\
\SA(A)&:=\lambda ij\langle \row(A)-1,1,\entry(A,i+1,1)\rangle, \\
\M(A)&:=\lambda ij\langle
\row(A)-1,\col(A)-1,\entry(A,i+1,j+1)\rangle.
\end{split}
\end{equation}\label{RSM}

\subsubsection{Rules for $\LA$}

In addition to all the axioms just presented, $\LA$
has two rules:  matrix equality and induction.

\medskip
\noindent
{\bf Matrix equality rule}

From the premises:
$\entry(T,i,j)=\entry(U,i,j)$,
$\row(T)=\row(U)$ and
$\col(T)=\col(U)$,
we conclude $T=U$.

The only restriction is that the variables $i,j$ may not occur free in
$T=U$; other than that, $T$ and $U$ can be arbitrary matrix terms.
Our semantics
implies that $i$ and $j$ are implicitly universally quantified in
the top formula.  The rule allows us to conclude
$T=U$, provided that $T$ and $U$ have the same numbers of rows and
columns, and corresponding entries are equal.

\medskip
\noindent
{\bf Induction rule}
$\alpha(i)\ra\alpha(i+1)$ implies
$\alpha(0)\ra\alpha(n)$.

Here $\alpha(i)$ is any formula, $n$ is any term of type index, and
$\alpha(n)$ indicates $n$ is substituted for free occurrences of $i$
in $\alpha(i)$.  (Similarly for $\alpha(0)$.) 
Note that in $\LA$ we
only allow induction over $\Sigma_0^B$ formulas (no matrix
quantifiers), 
whereas in $\ELA$ we allow induction over $\Sigma_1^B$
formulas 
(a single block \newpage\noindent of bounded existential matrix quantifiers
when $\alpha$ is put in prenex form).
This completes the description of $\LA$.  We finish this section
by observing the substitution property in the lemma below.
We say that a formula $S'$ of $\LA$ is a {\em substitution instance}
of a formula $S$ of $\LA$ provided that $S'$ results by substituting
terms for free variables of $S$.  Of course each term must have
the same sort as the variable it replaces, and bound variables
must be renamed as appropriate.

\begin{lem}
Every substitution instance of a theorem of $\LA$ is a theorem of $\LA$.
\end{lem}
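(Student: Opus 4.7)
The plan is to prove this by induction on the length of an $\LA$-derivation of the theorem $S$. The induction hypothesis will be that for every formula $S$ derivable in at most $k$ steps, every substitution instance $S\sigma$ (obtained by substituting terms of the appropriate sorts for free variables of $S$, with bound variables renamed to avoid capture) is also derivable in $\LA$.

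For the base case, the axiom schemas A1--A33 are each stated as ``every legal substitution of terms for free variables is an axiom,'' so a substitution instance of an axiom is again an axiom, and the base case is immediate. The only subtlety is that axioms such as A28 involve $\lambda$-terms with bound index variables $i,j$: before substituting, we rename these bound variables to be disjoint from the variables appearing in the substituted terms, which is legal by the paper's convention that bound variables may be renamed to any new distinct pair.

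For the inductive step, I would treat each inference rule of $\LA$ separately. The usual propositional and quantifier rules (implicit in the Tait/sequent-style or Hilbert-style presentation of $\LA$) commute with substitution in the standard way, after possibly renaming bound first-order/matrix variables to avoid capture by $\sigma$. For the matrix equality rule, suppose we concluded $T=U$ from $\entry(T,i,j)=\entry(U,i,j)$, $\row(T)=\row(U)$, $\col(T)=\col(U)$, with $i,j$ not free in $T=U$. Given a substitution $\sigma$, first rename the side variables $i,j$ to fresh $i',j'$ not occurring in $\sigma$ or in $T,U$; applying $\sigma$ then commutes with $\entry,\row,\col$ and preserves the side condition that $i',j'$ do not occur free in $(T=U)\sigma$, so the rule applies again to yield $T\sigma=U\sigma$.

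The main obstacle, and the step I would dwell on most, is the \emph{induction rule}: from $\alpha(i)\ra\alpha(i+1)$ conclude $\alpha(0)\ra\alpha(n)$. Here $i$ is distinguished as the induction variable, so before applying $\sigma$ I would rename $i$ to a fresh index variable $i^*$ not occurring in $\sigma$ or in any terms being substituted, obtaining the equivalent premise $\alpha(i^*)\ra\alpha(i^*+1)$ with the corresponding conclusion $\alpha(0)\ra\alpha(n)$. Applying $\sigma$ (which by choice of $i^*$ does not touch $i^*$) yields a premise $\alpha^\sigma(i^*)\ra\alpha^\sigma(i^*+1)$ derivable by the induction hypothesis, from which the induction rule produces $\alpha^\sigma(0)\ra\alpha^\sigma(n\sigma)$, as required. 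The bookkeeping for the three sorts (index, ring, matrix) is uniform, and once variable renaming is handled carefully the argument is routine; combining the cases closes the induction and establishes the lemma.
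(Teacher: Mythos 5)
Your proposal is correct and follows the same route as the paper: the paper's proof is exactly a straightforward induction on $\LA$-derivations, with the base case supplied by the fact that every substitution instance of an axiom is again an axiom (since the axioms are given as schemas closed under legal substitution). Your additional detail on the inductive step --- commuting substitution with the matrix equality rule and the induction rule after renaming the eigenvariables --- is the routine bookkeeping the paper leaves implicit, and it is handled correctly.
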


This follows by straightforward
induction on $\LA$ proofs.  The base case follows from
the fact that every substitution instance of an $\LA$ axiom is an
$\LA$ axiom.

\bibliographystyle{IEEEtran}


\end{document}